\providecommand{\U}[1]{\protect\rule{.1in}{.1in}}
\newtheorem{theorem}{Theorem}
\newtheorem{proposition}[theorem]{Proposition}
\newenvironment{proof}[1][Proof]{\noindent\textbf{#1.} }{\ \rule{0.5em}{0.5em}}
\newcommand{\ket}[1]{| #1 \rangle}
\newcommand{\bra}[1]{\langle #1 |}
\def\U{\mathrm{U}}
\def\B{\mathcal{B}}
\def\D{\mathcal{D}}
\def\G{\mathcal{G}}
\def\H{\mathcal{H}}
\def\U{\mathcal{U}}
\def\I{\mathcal{I}}
\def\L{\mathcal{L}}
\def\M{\mathcal{M}}
\def\N{\mathcal{N}}
\def\T{\mathcal{T}}
\def\S{\mathcal{S}}
\def\K{\mathcal{K}}
\newcommand{\tr}{\operatorname{Tr}}
\begin{document}
\title{Characterizing the performance of continuous-variable Gaussian quantum gates}
\author{Kunal Sharma}
\affiliation{Hearne Institute for Theoretical Physics, Department of Physics and Astronomy, and Center for Computation and Technology,
Louisiana State University, Baton Rouge, Louisiana 70803, USA}
\author{Mark M. Wilde}
\affiliation{Hearne Institute for Theoretical Physics, Department of Physics and Astronomy, and Center for Computation and Technology,
Louisiana State University, Baton Rouge, Louisiana 70803, USA}
\pacs{}

\begin{abstract}
The required set of operations for universal continuous-variable quantum computation can be divided into two primary categories: Gaussian and non-Gaussian operations. Furthermore, any Gaussian operation can be decomposed as a sequence of phase-space displacements and symplectic transformations. Although Gaussian operations are ubiquitous in quantum optics, their experimental realizations are generally approximations of the ideal Gaussian unitaries. In this work, we study different performance criteria to analyze how well these experimental approximations simulate the ideal Gaussian unitaries. In particular, we find that none of these experimental approximations converge uniformly to the ideal Gaussian unitaries. However, convergence occurs in the strong sense, or if the discrimination strategy is energy bounded, then the convergence is uniform in the Shirokov--Winter energy-constrained diamond norm and we give explicit bounds in this latter case. We indicate how these energy-constrained bounds can be used for experimental implementations of these Gaussian unitaries in order to achieve any desired accuracy. 
\end{abstract}

\volumeyear{ }
\volumenumber{ }
\issuenumber{ }
\eid{ }
\date{\today}
\startpage{1}
\endpage{10}
\maketitle

\section{Introduction}
Quantum computers use quantum properties such as superposition of quantum states and entanglement for information processing and computational tasks \cite{NC10}. One of the notions of universal quantum computation consists of the manipulation of qubits encoded in discrete quantum systems and the application of a universal set of quantum operations on these qubits \cite{NC10}. 
Another way to implement discrete-variable (DV) quantum computation is to encode a finite amount of quantum information into a continuous-variable (CV) system \cite{CY95,KLM01,GKP01}. This approach is appealing given that already existing advanced optical technologies can be used for state preparation, manipulation of states, and measurement for the required quantum computational tasks \cite{LOQC07}. 

The notion of quantum computation can be further extended to CV systems, such that the transformations involved are arbitrary polynomial functions of continuous variables \cite{SB99}.  Recently, there have been many interesting advances in the context of CV quantum cryptography \cite{GG02}, CV quantum computing \cite{CVIQC17, ARW17}, and quantum machine learning \cite{LPSW17}.
One of the advantages of CV quantum computation could be in simulating CV systems more efficiently in comparison to a DV quantum computer \cite{MPSW15}.
Moreover, a hybrid of DV and CV quantum computation could be efficient for distributed quantum computing and other related tasks~\cite{SL03, PVL11, FL11}. 

The required operations for universal CV quantum computation can be divided into two primary categories: Gaussian and non-Gaussian  operations \cite{SB99, BL05}. Gaussian operations correspond to the evolution of the state of light under a Hamiltonian that is an arbitrary second-order polynomial in the electromagnetic field operators. In particular, any second-order Hamiltonian can be decomposed as a sequence of phase-space displacements (elements of the Heisenberg--Weyl group) and symplectic transformations (see, e.g., \cite{AS17} for a review). In general, along with Gaussian unitary operations, access to a Hamiltonian of at least the third power in the quadrature operators is sufficient to approximate any non-Gaussian Hamiltonian that is polynomial in the quadrature operators \cite{SB99, SL11}.

These CV Gaussian quantum gates have been extensively investigated both theoretically and experimentally in the context of quantum optics and quantum information processing \cite{P96, FMA05, MIS07, MISUM08}. In general, these quantum gates are not experimentally realized in their ideal form.
Rather, one approximates these operations using a sequence of other basic operations. For example, a displacement unitary on an arbitrary input state is commonly approximated by sending it through a particular beamsplitter along with a highly excited coherent state \cite{P96}. Moreover, squeezing and SUM transformations are generally implemented using strongly pumped nonlinear processes, which are inherently noisy, and their high sensitivity to the coupling of optical fields in a nonlinear medium makes their implementation on an arbitrary quantum state challenging \cite{FMA05}. Rather, one can approximately realize these latter gates by using a sequence of passive transformations, homodyne measurements, and off-line squeezed vacuum states \cite{FMA05, MIS07, MISUM08}.

Even if the different components involved in approximating a CV quantum gate are considered ideal, it is natural to ask the following question: in what sense does a sequence of these approximations converge to the desired quantum gate? More formally, let $\{\M^k\}_k$ denote a sequence of quantum channels corresponding to the approximations of a quantum channel $\N$. Then in what sense does the sequence $\{\M^k\}_k$ converge to  $\N$? Since these quantum channels are superoperators (completely positive and trace-preserving maps from density operators to density operators), one needs to consider various topologies on the set of quantum channels in order to study convergence. In the present context, we focus on three different notions of convergence for quantum channels: uniform and strong convergence (as presented in \cite{SH07}), and uniform convergence on the set of density operators whose marginals on the channel input have bounded energy (as presented in \cite{S18,W17}; see the appendices for more details). 

In this work, we study the aforementioned three different performance criteria to analyze how well experimental approximations simulate ideal Gaussian operations. We mainly focus on particular Gaussian unitaries, such as displacement operators, phase rotations, beamsplitters, single-mode squeezing operators, and the SUM operation, which are sufficient to generate any arbitrary Gaussian unitary operation acting on $n$ modes of the electromagnetic field \cite{BSBN02}. Results of a similar spirit, but for different examples, appeared recently in~\cite{W17,M18,BD18}.

In particular, we prove that none of these experimental approximations converge uniformly  to the ideal Gaussian processes. Qualitatively, the uniform convergence of a sequence of experimental approximations to an ideal Gaussian operation implies that the convergence is independent of the input state \cite{SH07}. As stated in \cite{SH07}, it is the same as convergence in the well-known diamond norm \cite{K97}, which is typically considered in the context of finite-dimensional quantum channels.
Therefore, our results indicate that the notion of uniform convergence for these experimental approximations of the desired Gaussian unitary operation is too strong, and we note here that similar observations have been made in the context of infinite-dimensional channels in \cite{SH07, W17,S18,LSW18} (see the appendices for more details).

Next, we study the strong convergence of these experimental approximations to the ideal Gaussian unitaries. The notion of strong convergence \cite{SH07} corresponds to the convergence of a sequence of approximations to an ideal process, considered for each possible fixed input quantum state. In particular, we show that these experimental approximations of an ideal displacement operator, beamsplitter, phase rotation, single-mode squeezer, and SUM gate converge to the ideal unitaries in the strong sense. 

A physical meaning for these two kinds of convergence was discussed in \cite{M18} by using game-theoretic arguments. In particular, it was shown that the success probability in distinguishing some CV quantum channels from their teleportation simulations is related to these two kinds of convergence, for a specific construction of the ``CV teleportation game'' \cite[Section~III]{M18}. 

One can infer from the definitions of strong and uniform convergence that the notion of strong convergence is a weaker notion of convergence,  in fact implied by uniform convergence. Another notion of convergence, which is experimentally relevant, is uniform convergence on the set of density operators whose marginals on the channel input have bounded energy (as presented in \cite{S18,W17}). Recently, it has been shown that the strong convergence of a sequence of infinite-dimensional channels is equivalent to uniform convergence on the set of energy-bounded density operators \cite{S18}. Therefore, our results imply that these experimental approximations of an ideal displacement operator, single-mode squeezer, and SUM gate converge uniformly to the ideal unitaries on the set of energy-bounded density operators. In this work, we take the energy observable to be the number operator, and we use the terminology ``energy'' and ``mean photon number'' interchangeably.

In order to experimentally approximate these different unitary operations, it is important to study how the uniform convergence over the set of energy-bounded operators depends on different experimental parameters. In particular, we consider the energy-constrained sine distance \cite[Section~12]{SWAT} as a metric to bound the Shirokov--Winter energy-constrained diamond distance between an ideal displacement operator and its experimental approximation. We first show that the fidelity between the ideal displacement and its experimental approximation when acting on a fixed input state is equal to the fidelity between a pure-loss channel and an ideal channel when acting on the same input state. We then provide an analytical expression to upper bound the Shirokov--Winter energy-constrained diamond distance between an ideal displacement and its experimental approximations, by using the recent result of \cite{N18}. Furthermore, we study different performance metrics to analyze how well an experimental approximation simulates a tensor product of different displacement operators. 

We also establish two different lower bounds on the Shirokov--Winter energy-constrained diamond distance \cite{S18, W17} between an ideal displacement operator and its experimental approximation by employing two different techniques. A first technique is based on the trace distance between the outputs of these two channels for a particular choice of the input state. In particular, we provide an analytical expression for a  lower bound on the Shirokov--Winter energy-constrained diamond distance for low values of the energy constraint.   A second technique is to estimate the Shirokov--Winter energy-constrained diamond distance by using a semidefinite program (SDP) on a truncated Hilbert space. In particular, we use an SDP from \cite{W17}, which directly follows from an SDP from \cite{JW09,JW13} defined in the context of finite-dimensional quantum channels. Moreover, we analytically show that for a fixed value of the energy constraint and for a sufficiently high value of the truncation parameter, the Shirokov--Winter energy-constrained diamond distance between two quantum channels can be estimated with an arbitrarily high accuracy by using an SDP on a truncated Hilbert space.    

Similarly, we establish analytical bounds on the Shirokov--Winter energy-constrained diamond distance between ideal beamsplitters, phase rotations, and their respective experimental approximations. 
We also study uniform convergence over the energy-bounded quantum states of some experimental approximations of both an ideal single-mode squeezing operation and a SUM gate, by considering several experimentally relevant input quantum states. 

These Gaussian unitaries are key elements for CV quantum computation \cite{SB99}, CV quantum error correction \cite{SLB98, SLB1998, LS98}, CV quantum teleportation \cite{BK98}, improving the sensitivity of an interferometer in the context of quantum metrological tasks \cite{C81}, for generating a quantum non-demolition interaction between different modes \cite{FMA05}, and for bosonic codes  \cite{chuang1997bosonic,PhysRevA592631,PhysRevLett95100501,PhysRevA75042316,niset2008experimentally,leghtas2013hardware,PhysRevA94012311,PhysRevX6031006,PhysRevA97032346,li2019designing,PhysRevA99032344,noh2019encoding,puri2019bias,albert2019pair,guillaud2019repetition,pantaleoni2019modular,wang2019quantum,grimsmo2019quantum,noh2019fault,tzitrin2019towards,eaton2019non}. Therefore, our results quantifying the performance of their experimental approximations play a critical role in understanding how to achieve any desired accuracy for several practical applications. 

The rest of the paper is organized as follows. We first briefly summarize different notions of convergence considered in this paper. We next describe experimental implementations of a displacement operator, a beamsplitter, a phase rotation, a single-mode squeezer, and a SUM gate, and then we study different notions of convergence for these gates individually. Finally, we conclude with a brief summary. We note that the appendices provide detailed proofs of all statements that follow. In the arXiv posting of this paper, we have provided all source files (Mathematica and Matlab) needed to generate the plots given in our paper, some of which rely on QETLAB \cite{qetlab} and CVX \cite{cvx}.

\section{Notions of convergence for quantum channels}

In this section, we briefly summarize three different notions of convergence for quantum channels: uniform and strong convergence (as presented in \cite{SH07}), and uniform convergence on the set of density operators whose marginals on the channel input have bounded energy (as presented in \cite{S18, W17}).

We begin by reviewing some definitions relevant for the rest of the paper (see the appendices for more details).  Let $\H$ denote an infinite-dimensional, separable Hilbert space. Let $\T(\H)$ denote the set of trace-class operators, i.e., all operators $M$ with finite trace norm: $\left\Vert M \right\Vert_1 \equiv \tr(\sqrt{M^{\dagger}M})< \infty$. Let $\D(\H)$ denote the set of density operators (positive semi-definite with unit trace) acting on $\H$. The trace distance between two quantum states $\rho, \sigma \in \D(\H)$ is given by $\left\Vert \rho - \sigma \right\Vert_1$. The fidelity between $\rho$ and $\sigma$ is defined as \cite{U76}
\begin{equation}
F(\rho, \sigma) \equiv \left\Vert \sqrt{\rho} \sqrt{\sigma} \right\Vert_1^2 .
\end{equation}
  The sine distance or $C$-distance between two quantum states $\rho$ and $\sigma$ is defined as \cite{rastegin2002relative,rastegin2003lower,rastegin2006sine,gilchrist2005distance}
\begin{equation}
  C(\rho, \sigma) \equiv \sqrt{1-F(\rho, \sigma)},
\end{equation}
and it is a metric \cite{rastegin2002relative,rastegin2003lower,rastegin2006sine,gilchrist2005distance}.  The following bounds hold between the fidelity and the trace distance between two quantum states $\rho, \sigma \in \D(\H)$:
\begin{align}\label{eq:powers}
1- \sqrt{F(\rho, \sigma)} \leq \frac{1}{2} \left\Vert \rho - \sigma \right\Vert_1 \leq \sqrt{1-F(\rho, \sigma)}~,
\end{align}
with the lower bound following from the Powers-St{\o}rmer inequality \cite{powers1970} and the upper bound from Uhlmann's theorem \cite{U76}. See also \cite{FG98}.

Let $G$ denote a positive semidefinite operator. We assume that it has discrete spectrum and that it is bounded from below. In particular, let $\{\ket{e_k}\}_k$ be an orthonormal basis for a Hilbert space $\H$, and let $\{g_k\}_k$ be a sequence of non-negative real numbers. Then 
\begin{align}\label{eq:energy-obs}
G = \sum_{k = 0}^{\infty} g_k \ket{e_k}\bra{e_k}
\end{align} 
is a self-adjoint operator that we call an energy observable.

The number operator is defined as 
\begin{align}\label{eq:num-opt}
\hat{n} = \sum_{n=0}^{\infty} n \ket{n}\bra{n},
\end{align}
where $\ket{n}$ denotes a photon-number state with $n$ photons. From \eqref{eq:energy-obs}--\eqref{eq:num-opt}, it is evident that $\hat{n}$ is an energy observable. In particular, the expectation value of $\hat{n}$ corresponds to the mean number of photons in a single-mode quantum state. Moreover, we consider the following $l$th extension $\bar{\hat{n}}_l$ of the number operator $\hat{n}$:
\begin{align}
\bar{\hat{n}}_l = \hat{n}\otimes I\otimes \dots \otimes I  + \dots + I \otimes \dots \otimes \hat{n}~,
\end{align}
where $l$ is the number of factors in each tensor product above. The expectation value of $\bar{\hat{n}}_l$ corresponds to the mean number of photons in a multi-mode quantum state. 

In our paper, we employ the two-mode squeezed vacuum state with parameter $N\geq 0$, which is defined as
\begin{equation}\label{eq:tms}
\ket{\psi_{\operatorname{TMS}}(N)} \equiv \frac{1}{\sqrt{N+1}}\sum_{n=0}^{\infty}\sqrt{\left(\frac{N}{N+1}\right)^n} \ket{n}_R\ket{n}_A~,
\end{equation}
where $\ket{n}$ again denotes a photon-number state with $n$ photons. 

It is important to note that even though the state  in \eqref{eq:tms} is a well-defined quantum state for all $N\in [0,\infty)$, the limiting object, often called ``ideal EPR state''  $\lim_{\bar{n}\to \infty}\ket{\psi_{\text{TMS}}(N)}$ \cite{EPR35}, is not a quantum state, as it is unnormalizable and it is thus not contained in the set of density operators. Similarly, the eigenvectors of the position- and momentum-quadrature operators, denoted as $\ket{x}$ and $\ket{p}$, respectively, are also not quantum states. In spite of this, the notions of uniform and strong convergence involve a supremum over the set of density operators, and so these objects can be approached in a suitable limit. Note that this point has been clarified previously in the context of uniform and strong convergence \cite{M18}.

We now recall the notion of uniform convergence for quantum channels. 
Let $\{\M^k_{A\to B}\}_k$ denote a sequence of quantum channels, where each channel takes a trace class operator acting on a separable Hilbert space  $\H_A$ to a trace class operator acting on a separable Hilbert space $\H_B$. Then the channel sequence $\{\M^k_{A\to B}\}_k$ converges uniformly to another quantum channel $\N_{A\to B}$ if the following holds:
\begin{align}\label{def:uniform-convergence}
\lim_{k \to \infty} \left\Vert \M^k_{A\to B} - \N_{A\to B}\right\Vert_{\diamond} = 0~, 
\end{align}
where $\left\Vert \L_{A\to B} \right\Vert_{\diamond}$ denotes the diamond norm of a Hermiticity preserving linear map $\L_{A\to B}$,  defined as 
\begin{equation}
\left\Vert \L_{A\to B} \right\Vert_{\diamond} \equiv \sup_{\psi_{RA}\in \D(\H_R \otimes \H_A)} \left\Vert (\I_R \otimes \L_{A\to B}) (\psi_{RA})\right\Vert_1, 
\end{equation} 
where $\I_R$ is an identity channel acting on Hilbert space $\H_R$, $\psi_{RA}$ is a pure state, and system $R$ is isomorphic to the channel input system $A$ \cite{K97}. Due to the supremum being taken, note that the diamond norm might only be achieved in the limit (for example, for a sequence of two-mode squeezed vacuum states with squeezing strength becoming arbitrarily large, as discussed in \cite{M18}).

The channel sequence $\{\M^k_{A\to B}\}_k$ converges to another quantum channel $\N_{A\to B}$ in the strong sense if for all $\psi_{RA} \in \D(\H_R \otimes \H_A)$, the following holds:
\begin{align}\label{def:strong-convergence}
\lim_{k\to \infty} \left\Vert \M^k_{A\to B}(\psi_{RA}) -\N_{A\to B} (\psi_{RA}) \right\Vert_1= 0~,
\end{align}
which can be summarized more compactly as
\begin{align}
\label{def:strong-convergence-compact}
\sup_{\psi_{RA}\in \D(\H_{RA})} \lim_{k\to \infty} \left\Vert \M^k_{A\to B}(\psi_{RA}) -\N_{A\to B} (\psi_{RA}) \right\Vert_1= 0~,
\end{align}
where it is implicit that the identity channel acts on the reference system $R$.
Therefore, convergence in the strong sense is the statement that, for each fixed input quantum state $\psi_{RA}$, the sequence $\{\M^k_{A\to B}(\psi_{RA})\}_k$ of states converges to the state $\N_{A\to B}(\psi_{RA})$ in trace norm. It is important to note that the different orders in which the limits and suprema are taken in \eqref{def:uniform-convergence} and \eqref{def:strong-convergence-compact} lead to physically distinct situations, as discussed in \cite{M18}.

Let $H_A$ denote an energy observable corresponding to the quantum system $A$. Then the channel sequence $\{\M^k_{A\to B}\}_k$ converges uniformly (on the set of density operators whose marginals on the channel input have bounded energy) to another quantum channel $\N_{A\to B}$ if the following holds for some $E\in[0,\infty)$:
\begin{align}
\lim_{k \to \infty} \left\Vert \M^k_{A\to B} - \N_{A\to B}\right\Vert_{\diamond E} = 0~,
\end{align}
where the Shirokov--Winter energy-constrained diamond distance is defined as \cite{S18, W17}
\begin{multline}\label{eq:energy-constrained-diamond-norm}
\left\Vert \M^k_{A\to B} - \N_{A\to B}\right\Vert_{\diamond E} \equiv \\
 \sup_{\psi_{RA}: \tr(H_A \psi_{A}) \leq E} \left\Vert \M^k_{A\to B}(\psi_{RA}) - \N_{A\to B}(\psi_{RA}) \right\Vert_1~,  
\end{multline}
and it is again implicit that the identity channel acts on the reference system $R$. 

The energy-constrained sine distance between two quantum channels $\N_{A\to B}$ and $\M_{A\to B}$ is defined for $E \in [0,\infty)$ as \cite[Section~12]{SWAT}
\begin{multline}
C_E(\N_{A\to B}, \M_{A\to B}) \equiv  \\
  \sup_{\psi_{RA}: \tr(H_A \psi_{A}) \leq E} \sqrt{1- F(\N_{A\to B} (\psi_{RA}), \M_{A\to B} (\psi_{RA}))}.
\end{multline}

\section{Approximation of a displacement operator}\label{sec:displacement-approx}

We now analyze convergence of the experimental implementation of a displacement operator from \cite{P96} (see also \cite{DS95}) to the ideal displacement operator. For a single-mode light field, a unitary displacement operator  is defined as \cite{AS17}
 \begin{align} 
 D(\alpha) \equiv \exp(\alpha \hat{a}^{\dagger} - \alpha^{*}\hat{a}), 
 \end{align}
 where $\alpha\in \mathbb{C}$, $\hat{a} = (\hat{x} + i \hat{p})/\sqrt{2}$ is an annihilation operator, and $\hat{x}$ and $ \hat{p}$ are position- and momentum-quadrature operators, respectively. The action of a displacement operator on a single-mode Gaussian state $\rho$ can be understood as a displacement of the mean values $\langle \hat{x}\rangle_{\rho}$ and $\langle \hat{p}\rangle_{\rho}$. 
 Moreover, any displacement operator acting on $n$ modes can be decomposed as a tensor product of displacement operators acting on each mode \cite{AS17}.

 Let $\rho_A$ be a single-mode input quantum state. We then simulate the action of $D(\alpha)$ on the state $\rho_A$, according to \cite{P96}, by employing a beamsplitter $\B^{\eta}_{AB}$ of transmissivity $\eta \in (0, 1)$ and an environment state prepared in a coherent state $\ket{\beta}_{B}$ \cite{AS17}, where $\beta$ is chosen such that $\sqrt{1-\eta} \beta = \alpha$. We denote the channel corresponding to the experimental implementation of the displacement operator $D(\alpha)$ by
\begin{equation}
 \widetilde{\D}^{\eta, \beta} = \widetilde{\D}^{\eta, \frac{\alpha}{\sqrt{1-\eta}}}.
\end{equation}
As described in Figure \ref{fig:disp-sim}, the simulation of the ideal channel $\D^{\alpha}(\rho_A) \equiv D(\alpha) \rho_A D(-\alpha)$ realized by the displacement operator $D(\alpha)$ is given by the following transformation:
\begin{align}
\widetilde{\D}^{\eta, \beta}(\rho_A)\equiv \tr_B (\B^{\eta}_{AB} (\rho_A \otimes \ket{\beta}\bra{\beta}_B) ). 
\end{align}

We first show that the fidelity between the ideal displacement and its experimental approximation when acting on a fixed input state is equal to the fidelity between a pure-loss channel and an ideal channel when acting on the same input state. By using the following covariance of the beamsplitter channel with respect to displacements \cite{AS17}:
\begin{equation}
\B^{\eta}_{AB} \circ \D_B^{\beta} = \left[\D_{A}^{
	\sqrt{1-\eta}\beta}\otimes\D_{B}^{\sqrt{\eta}\beta}\right]\circ
\mathcal{B}_{AB}^{\eta}, 
\end{equation}
we arrive at the following simplification:
\begin{align} \label{eq:displacement-simulation}
(\tr_B \circ \B^{\eta}_{AB}) (\rho_A \otimes \ket{\beta}\bra{\beta}_B) = (\D_A^{\alpha}\circ \L^{\eta}_A)(\rho_A)~,
\end{align}
where $\L_{A}^{\eta}(\rho_{A})=(\tr_{B}\circ\B_{AB}^{\eta})(\rho_{A}\otimes\ket{0}\bra{0}_{B})$ denotes a pure-loss channel with transmissivity $\eta$ and $\ket{0}$ denotes the vacuum state.

\begin{figure}[ptb]
	\begin{center}
		\includegraphics[
		width=2.5in
		]{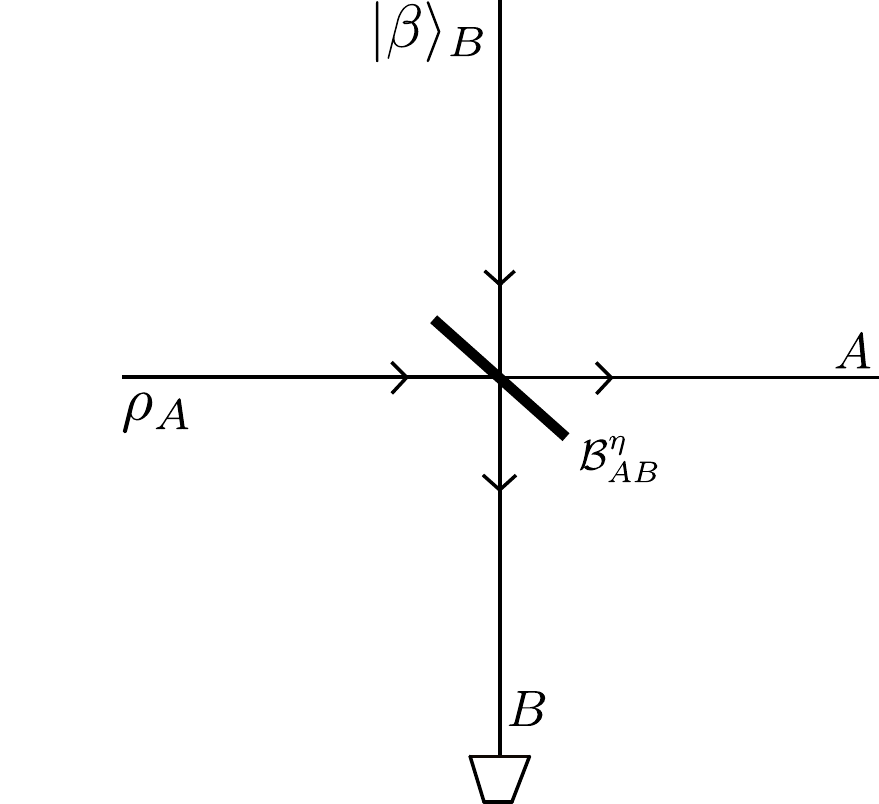}
	\end{center}
	\caption{The figure plots an experimental approximation $\widetilde{\D}^{\eta, \frac{\alpha}{\sqrt{1-\eta}}}$ of the ideal displacement operation $\D^{\alpha}$ on the input state $\rho_A$, as introduced in \cite{P96}. $\ket{\beta}_B$ represents a coherent state in mode $B$, where $\alpha = \sqrt{1-\eta}\beta$. $\B^{\eta}_{AB}$ represents a beamsplitter channel with transmissivity $\eta$. The experimental approximation of $\D^{\alpha}$ corresponds to sending $\rho_A$ and $\ket{\beta}_B$ through $\B^{\eta}_{AB}$, and then tracing out the mode $B$ \cite{P96}.}
	\label{fig:disp-sim}
	\end{figure}

Now let $\psi_{RA}$ denote an arbitrary two-mode pure state. Computing the fidelity between the ideal displacement~$\D^{\alpha}$ and its experimental approximation $\widetilde{\D}^{\eta, \frac{\alpha}{\sqrt{1-\eta}}}$ by using  \eqref{eq:displacement-simulation} and the unitary invariance of the fidelity, we find that 
\begin{equation}\label{eq:displacement-fidelity-simplified}
F(\D^{\alpha}(\psi_{RA}),\widetilde{\D}^{\eta, \frac{\alpha}{\sqrt{1-\eta}}}(\psi_{RA})) =  F(\psi_{RA}, \L^{\eta}_A(\psi_{RA})).
\end{equation}
Therefore, analyzing the convergence of the sequence $\{\widetilde{\D}^{\eta, \frac{\alpha}{\sqrt{1-\eta}}}\}_{\eta \in[0,1)}$ to $\D^{\alpha}$ is equivalent to analyzing the convergence of a sequence of pure-loss channels to an ideal channel. 

\subsection{Lack of uniform convergence}\label{sec:uniform-conv-displacement}

We now prove that the sequence  $\{\widetilde{\D}^{\eta, \frac{\alpha}{\sqrt{1-\eta}}}\}_{\eta \in[0,1)}$ does not  converge uniformly  to $\D^{\alpha}$, which follows from \eqref{eq:displacement-fidelity-simplified} and \cite[Proposition~2]{W17}.  Let $\ket{\delta}$ be a pure input coherent state. Then we find that
\begin{multline}
F(\D^{\alpha}(\ket{\delta}\bra{\delta}), \widetilde{\D}^{\eta, \frac{\alpha}{\sqrt{1-\eta}}}(\ket{\delta}\bra{\delta})) \\
= \exp[- (\vert \delta \vert^2(1-\sqrt{\eta})^2)/2],
\end{multline}
where we used \eqref{eq:displacement-fidelity-simplified} and the fact that $\vert \langle \gamma | \delta \rangle|^2 = \exp(-\left|\gamma - \delta\right|^2)$ for coherent states $\vert \gamma\rangle$ and $\vert \delta\rangle$. 
Therefore,
\begin{align}\label{eq:uniform-convergence-displacement}
\lim_{\vert\delta \vert^2 \to \infty} F(\D^{\alpha}(\ket{\delta}\bra{\delta}), \widetilde{\D}^{\eta, \frac{\alpha}{\sqrt{1-\eta}}}(\ket{\delta}\bra{\delta})) = 0~.
\end{align}

Let $\ket{\phi}_{RA} = \ket{0}_R\ket{\delta}_A$. 
Using \eqref{eq:powers}, \eqref{eq:uniform-convergence-displacement}, and the fact that $\left\Vert \rho \otimes \omega - \sigma \otimes \omega\right\Vert_1 = \left\Vert \rho - \sigma \right\Vert_1$, for any density operators $\rho, \sigma, \omega$, we find that 
\begin{align}
\label{eq:uniform-convergence-displacement-trace-distance}
\lim_{\vert\delta \vert^2 \to \infty} \left\Vert  \I_R \otimes \D_A^{\alpha}(\phi_{RA}) - \I_R \otimes \widetilde{\D}_A^{\eta, \frac{\alpha}{\sqrt{1-\eta}}}(\phi_{RA}) \right\Vert_1 = 2~,
\end{align}
which is the maximum value of the diamond distance between any two quantum channels. Therefore, the definition in \eqref{def:uniform-convergence} and the equality in \eqref{eq:uniform-convergence-displacement-trace-distance} imply that the sequence $\{\widetilde{\D}^{\eta, \frac{\alpha}{\sqrt{1-\eta}}}\}_{\eta \in[0,1)}$ does not converge uniformly to the ideal displacement channel $\D^{\alpha}$. The equality in \eqref{eq:uniform-convergence-displacement-trace-distance} indicates that the ideal displacement $\D^{\alpha}$ and its experimental approximation $\widetilde{\D}^{\eta, \frac{\alpha}{\sqrt{1-\eta}}}$ become perfectly distinguishable in the limit that the input state has unbounded energy. We note that the lack of uniform convergence of a sequence of pure-loss channels to another pure-loss channel was recently studied in \cite[Proposition~2]{W17}. 

\subsection{Strong convergence}\label{sec:strong-conv-displace}

We now argue that the sequence $\{\widetilde{\D}^{\eta, \frac{\alpha}{\sqrt{1-\eta}}}\}_{\eta \in[0,1)}$  converges to $\D^{\alpha}$ in the strong sense. Let $\chi_{\rho_A}(x, p)$ denote the Wigner characteristic function \cite{AS17} for the input state $\rho_A$. 
Let $\tilde{\rho}^{\text{out}}_A$ denote the state after the action of $\widetilde{\D}^{\eta, \frac{\alpha}{\sqrt{1-\eta}}}$ on $\rho_{A}$:
\begin{equation}
\tilde{\rho}^{\text{out}}_A = \widetilde{\D}^{\eta, \frac{\alpha}{\sqrt{1-\eta}}}(\rho_{A})\, .
\end{equation}
Then the characteristic function of $\tilde{\rho}^{\text{out}}_A$ is given by
\begin{multline}
\chi_{\tilde{\rho}^{\text{out}}_A} (x, p)=  \\\chi_{\rho_A}(\sqrt{\eta}x, \sqrt{\eta}p) e^{[i \sqrt{2} (  p\text{Re}(\alpha)-x\text{Im}(\alpha) ) -(1/4)(x^2+p^2)(1-\eta)]}\, .
\end{multline}
Moreover, the characteristic function after the action of an ideal displacement channel $\D^\alpha$ on $\rho_A$ is given by 
\begin{equation}
\chi_{\D^{\alpha}(\rho_A)} (x, p)= \chi_{\rho_A}(x, p) e^{[i \sqrt{2} ( p\text{Re}(\alpha)-x\text{Im}(\alpha))]}\, .
\end{equation}
Therefore, for each $\rho_A\in \D(\H_A)$, and for all $x, p \in \mathbb{R}$
\begin{align} \label{eq:strong-conv-displacement} 
\lim_{\eta \to 1} \chi_{\tilde{\rho}^{\text{out}}_A} (x, p) = \chi_{\D^{\alpha}(\rho_A)}(x, p)~.
\end{align} 
We have thus shown that the sequence of characteristic functions $\chi_{\tilde{\rho}^{\text{out}}_A}$ converges pointwise to $\chi_{\D^\alpha(\rho_A)}$, which implies
by \cite[Lemma~8]{LSW18}
that the sequence $\{\widetilde{\D}^{\eta, \frac{\alpha}{\sqrt{1-\eta}}}\}_{\eta \in[0,1)}$  converges to $\D^{\alpha}$ in the strong sense. 

\subsection{Convergence in the Shirokov--Winter energy-constrained diamond norm}\label{sec:SW-energy-constrained-dis-op}

We now discuss uniform convergence of the sequence $\{\widetilde{\D}^{\eta, \frac{\alpha}{\sqrt{1-\eta}}}\}_{\eta \in[0,1)}$  to $\D^{\alpha}$ on the set of density operators whose marginals on the channel input have bounded energy. As observed in \cite{S18}, a sequence of quantum channels converges strongly to a quantum channel if and only if  it converges uniformly on the set of density operators whose marginals on the channel input have bounded energy. Therefore, the sequence $\{\widetilde{\D}^{\eta, \frac{\alpha}{\sqrt{1-\eta}}}\}_{\eta \in[0,1)}$  converges uniformly to $\D^{\alpha}$ if the input states have a finite energy constraint.

However, from an experimental perspective, it is important to know how the energy-constrained uniform convergence depends on experimental parameters. Using \eqref{eq:powers} and \eqref{eq:displacement-fidelity-simplified}, we find that 
\begin{align}
\frac{1}{2}&\left\Vert\D^{\alpha} -\widetilde{\D}^{\eta, \frac{\alpha}{\sqrt{1-\eta}}}\right\Vert_{\diamond E} \nonumber \\
& \leq \sup_{\psi_{RA}: \tr(H_A\psi_{A})\leq E}\sqrt{1- F[\psi_{RA}, \L^{\eta}_A(\psi_{RA})]}  \\
 &=\sqrt{1 - \left[(1- \{E\}) \sqrt{\eta}^{\lfloor E\rfloor} + \{E\}\sqrt{\eta}^{\lceil E\rceil} \right]^2} ~, \label{eq:buresd-displacement}
\end{align}
where $\{E \} = E - \lfloor E \rfloor$. The equality follows from the recent result of \cite{N18} (see also the earlier result in \cite{RN11}), where  the energy-constrained Bures distance \cite{Sh16} between two pure-loss channels was calculated.
From \eqref{eq:buresd-displacement}, it is easy to see that 
\begin{align}
\lim_{\eta \to 1} \frac{1}{2} \left\Vert\D^{\alpha} -\widetilde{\D}^{\eta, \frac{\alpha}{\sqrt{1-\eta}}}\right\Vert_{\diamond E} = 0~,
\end{align} 
which justifies the energy-constrained uniform convergence of $\{\widetilde{\D}^{\eta, \frac{\alpha}{\sqrt{1-\eta}}}\}_{\eta \in[0,1)}$  to $\D^{\alpha}$. Furthermore, the optimal state $\psi_{RA}$  that saturates the equality in \eqref{eq:buresd-displacement} is
\begin{align}
\ket{\psi}_{RA} = \sqrt{1-\{E\}} \ket{\lfloor E\rfloor}_A \ket{\tau}_R + \sqrt{\{E\}} \ket{\lceil E \rceil}_A\ket{\tau^{\perp}}_R~, \label{eq:opt-state}
\end{align}
which follows directly from \cite{N18}. Here $\ket{\tau}$ and $\ket{\tau^{\perp}}$ are normalized orthogonal states.

Next, we perform numerical evaluations to see how close the experimental approximation $\widetilde{\D}^{\eta, \frac{\alpha}{\sqrt{1-\eta}}}$ is to the ideal displacement channel $\D^{\alpha}$. We denote the energy-constrained sine distance \cite[Section~12]{SWAT} obtained in \eqref{eq:buresd-displacement} as
\begin{align}\label{eq:energy-constrained-sd-displacement}
f(\eta, E) = \sqrt{1 - \left[(1- \{E\}) \sqrt{\eta}^{\lfloor E\rfloor} + \{E\}\sqrt{\eta}^{\lceil E\rceil} \right]^2}~.
 \end{align}
In Figure \ref{fig:energy-constraind-displacement}, we plot $f(\eta, E)$ versus $\eta$ for certain values of the energy constraint $E$. In particular, we find that for all values of $E$, the experimental approximation $\widetilde{\D}^{\eta, \frac{\alpha}{\sqrt{1-\eta}}}$ simulates 
the ideal displacement $\D^{\alpha}$ with a  high accuracy for $\eta \approx 1$. Moreover, for a fixed value of~$\eta$, the simulation of $\D^{\alpha}$ is more accurate for low values of the energy constraint on input states. 

In Figure \ref{fig:energy-constraind-displacement2}, we zoom in on Figure~\ref{fig:energy-constraind-displacement} for high values of $\eta$. Figure~\ref{fig:energy-constraind-displacement2} indicates that it is only for low values of $E$ and high values of $\eta$ that  high accuracy in simulating $\D^{\alpha}$ can be achieved. 
 Therefore, energy constraints on the input states play a critical role in simulating ideal unitary operations and determining error propagation.

\begin{figure}[ptb]
		\includegraphics[
		width=3.4in
		]{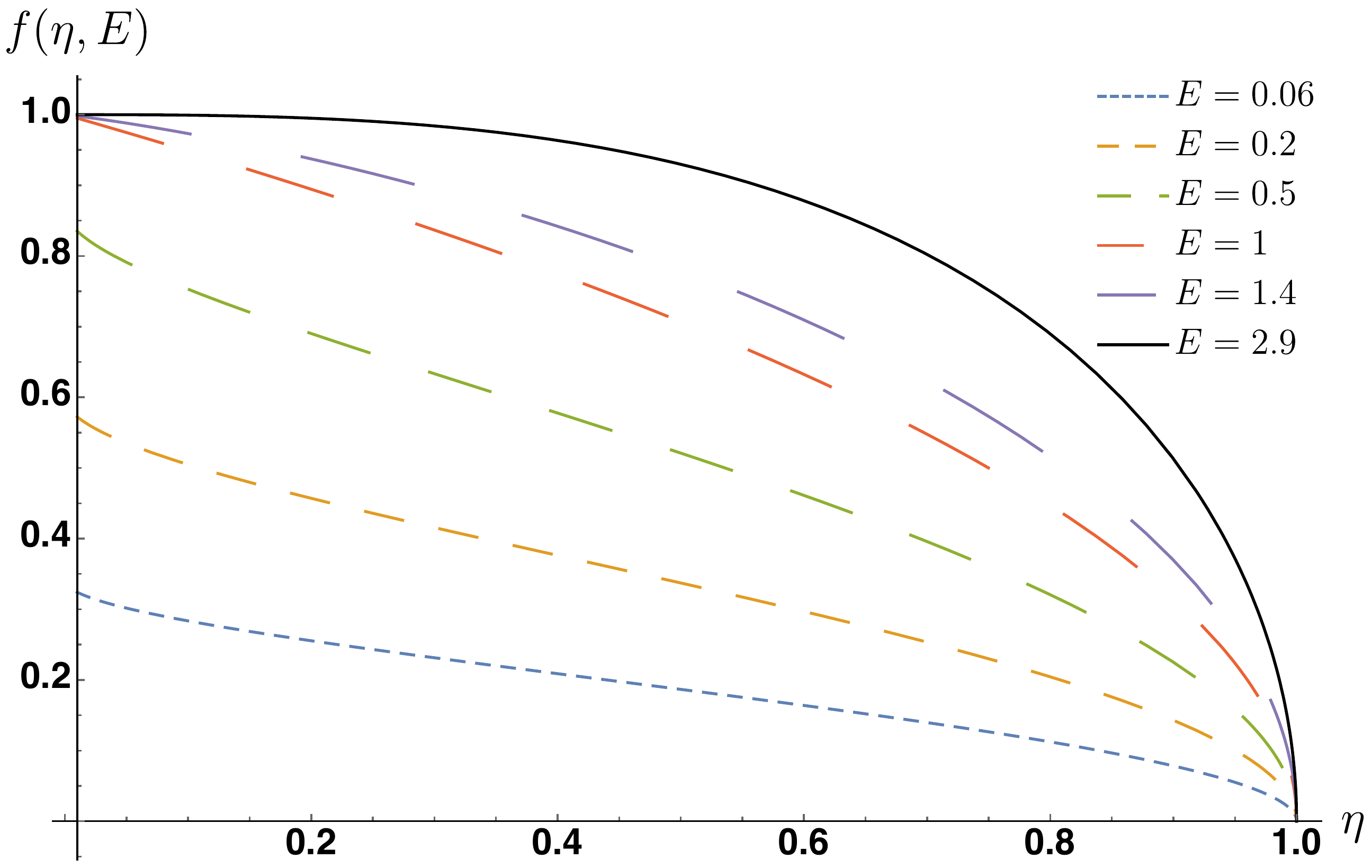}
	\caption{The figure plots the energy-constrained sine distance $f(\eta, E)$ \eqref{eq:energy-constrained-sd-displacement} between an ideal displacement channel $\D^{\alpha}$ and its experimental approximation $\widetilde{\D}^{\eta, \frac{\alpha}{\sqrt{1-\eta}}}$. In the figure, we select certain values of the energy constraint $E$, with the choices indicated next to the figure. In all the cases, $\widetilde{\D}^{\eta, \frac{\alpha}{\sqrt{1-\eta}}}$ simulates $\D^{\alpha}$ with a high accuracy for values of $\eta \approx 1$. Moreover, for a fixed value of $\eta$, the simulation of $\D^{\alpha}$ is more accurate for low values of the energy constraint on input states. }
	\label{fig:energy-constraind-displacement}\end{figure}

\begin{figure}[ptb]
	\includegraphics[
	width=3.4in
	]{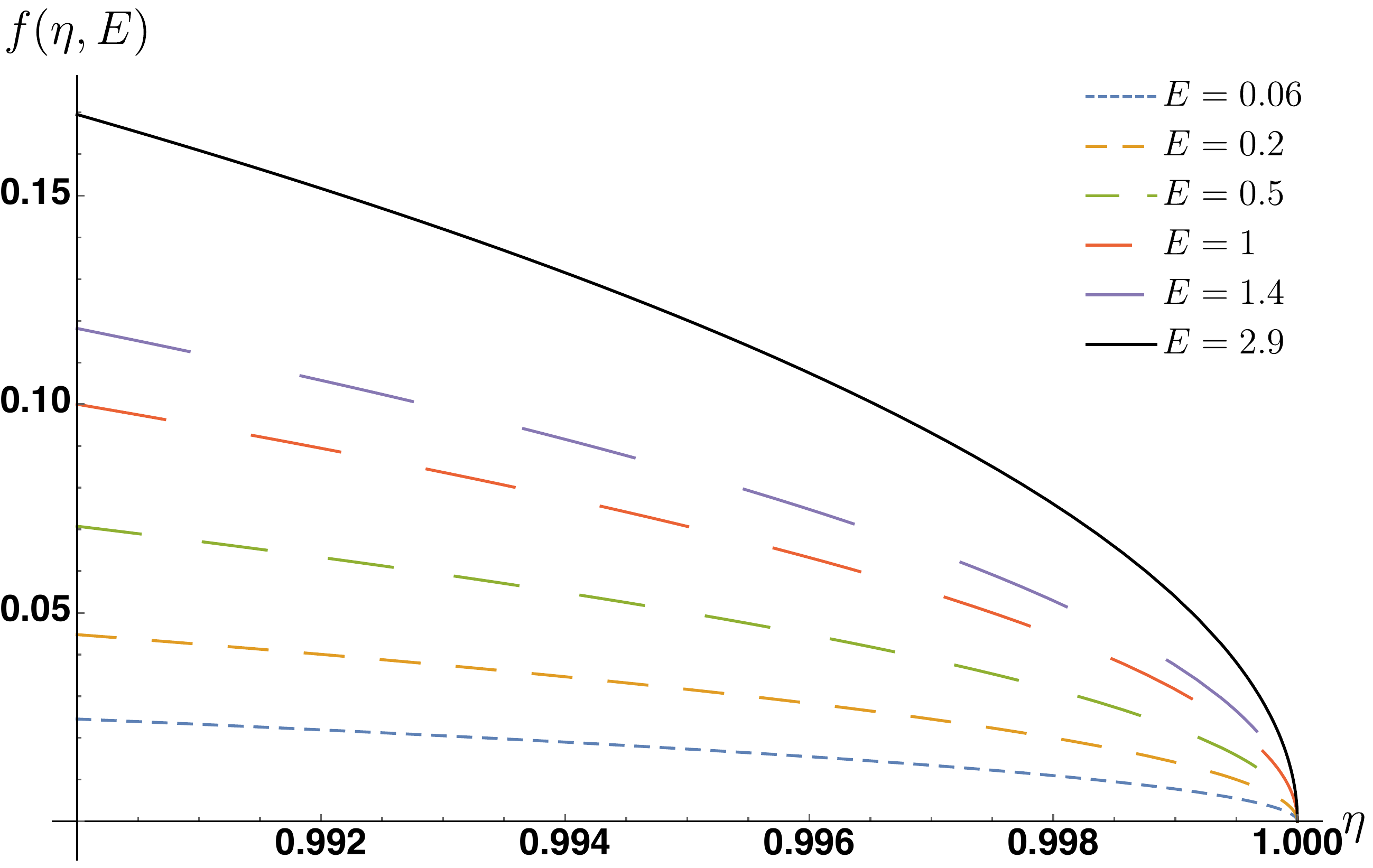}
	\caption{The figure plots  Figure \ref{fig:energy-constraind-displacement} for high values of $\eta$. The figure indicates that, only for low values of $E$ and high values of $\eta$,  high accuracy in simulating $\D^{\alpha}$ can be achieved. \label{fig:energy-constraind-displacement2}
}\end{figure}

We now analyze a simple case when the energy constraint $E$ on the input density operators takes on an integer value. From \eqref{eq:buresd-displacement}, we find that
\begin{equation}
\inf_{\psi_{RA}: \tr(H_A\psi_{A})\leq E}F[\D^{\alpha}(\psi_{RA}), \widetilde{\D}^{\eta, \frac{\alpha}{\sqrt{1-\eta}}}(\psi_{RA})] =\eta^{E} ~. \label{eq:fidelity-integerE-displacement}
\end{equation}
Therefore, for a given energy constraint on input states, and to implement an ideal displacement channel $\D^{\alpha}$ with any desired accuracy, one can find $\eta$ from \eqref{eq:buresd-displacement}--\eqref{eq:fidelity-integerE-displacement}, and the corresponding $\beta$ from $\sqrt{1-\eta}\beta  = \alpha$. The equality in 
\eqref{eq:fidelity-integerE-displacement} illustrates just how difficult it is to achieve a good accuracy in simulating an ideal displacement channel: in order to achieve the same fidelity, one requires an exponential increase in $\eta$ to match only a linear increase in~$E$.

We now summarize the results from Sections \ref{sec:uniform-conv-displacement}--\ref{sec:SW-energy-constrained-dis-op}. From Sections \ref{sec:uniform-conv-displacement} and \ref{sec:strong-conv-displace}, it follows that the sequence $\{\widetilde{\D}^{\eta, \frac{\alpha}{\sqrt{1-\eta}}}\}_{\eta \in[0,1)}$  does not converge uniformly to $\D^{\alpha}$. Rather, convergence occurs in the strong sense.  In other words, convergence of $\{\widetilde{\D}^{\eta, \frac{\alpha}{\sqrt{1-\eta}}}\}_{\eta \in[0,1)}$ to $\D^{\alpha}$ is not independent of the input state; i.e., there exists an input state for which the experimental implementation of a displacement operation has the maximum possible value of the worst-case error.

It is important to stress that, although for a fixed finite value of the energy-constraint parameter $E$, the limit $\eta \to 1$ is necessary for the implementation of a displacement operation $\D^{\alpha}$  using $\{\widetilde{\D}^{\eta, \frac{\alpha}{\sqrt{1-\eta}}}\}_{\eta \in[0,1)}$ with a high accuracy, it also relies on the fact that $\sqrt{1-\eta}\beta = \alpha$. Due to the unitary invariance of the fidelity as shown in \eqref{eq:displacement-fidelity-simplified} of our paper, the fidelity between $\D^{\alpha}$ and $\{\widetilde{\D}^{\eta, \frac{\alpha}{\sqrt{1-\eta}}}\}_{\eta \in[0,1)}$ becomes independent of the parameter $\beta$. However, it is implicit from $\sqrt{1-\eta}\beta = \alpha$  that $\eta \to 1$ requires $\beta \to \infty$. Although high values of  $\beta$ are experimentally achievable, the ideal displacement operation is achieved only in the limiting sense. This raises a further question: is it possible to implement an ideal displacement operation through a different procedure than  in \cite{P96}, such that a high accuracy can be achieved?

\subsection{Convergence for a tensor product of displacements}\label{sec:conv-tensor-prod-displacement}

Let us briefly discuss the various notions of convergence for experimental approximations of a tensor product of ideal displacement channels. Let $\{\D^{\alpha_i}\}_{i=1}^L$ be a set of $L$ different displacement channels. We  approximate the tensor product of these operators by a tensor product of $\{\widetilde{\D}^{\eta_i, \beta_i}\}_{i=1}^L$,  such that $\sqrt{1-\eta_i}\beta_i = \alpha_i$, for $i\in\{1, \dots, L\}$. From the same counterexample given above (coherent states with large energy), it follows directly that the sequence $\{\bigotimes_{i=1}^{L}\widetilde{\D}^{\eta_i, \alpha_i/\sqrt{1-\eta_i}}\}_{\eta_1, \dots, \eta_L \in [0,1)}$ does not converge uniformly to $\bigotimes_{i=1}^L\D^{\alpha_i}$. Rather, the convergence holds in the strong sense, as a consequence of \cite[Proposition~1]{M18}. Moreover, suppose that there is an average energy constraint on the input state to the tensor product of displacement operators, i.e., $\tr(\widetilde{H}_{A^L}\psi_{A^L} ) \leq E$, where
\begin{equation}
\widetilde{H}_{A^L} \equiv H_A\otimes I \otimes \dots \otimes I+\dots + I\otimes \dots \otimes I \otimes H_A,
\end{equation}
 and $E \in [0,\infty)$. Let $\tr(H_A \psi_{A_i})= E_i$, where $E_i \in [0,\infty)$, $\forall i\in\{1, \dots, L\}$. Then by using triangle inequality for the sine distance, monotonicity of the sine distance, \cite[Proposition~1]{M18} , and \eqref{eq:buresd-displacement}, we find that
\begin{multline}
\frac{1}{2} \left\Vert \bigotimes_{i=1}^L\D^{\alpha_i} -  \bigotimes_{i=1}^{L}\widetilde{\D}^{\eta_i, \alpha_i/\sqrt{1-\eta_i}}\right\Vert_{\diamond E}\leq   \\
\max_{\{E_i\}_i: \sum_i E_i \leq E} \sum_{i=1}^{L}
f(\eta_i, E_i)
\end{multline}
See the appendices for more details. Therefore, $\{\bigotimes_{i=1}^{L}\widetilde{\D}^{\eta_i, \alpha_i/\sqrt{1-\eta_i}}\}_{\eta_1, \dots, \eta_L \in [0,1)}$ converges uniformly to $\bigotimes_{i=1}^L\D^{\alpha_i}$ on the set of density operators whose marginals on the channel input have bounded energy. 

\subsection{Estimates of Shirokov--Winter energy-constrained diamond distance}

We now provide good estimates of the Shirokov--Winter energy-constrained diamond distance, as defined in \eqref{eq:energy-constrained-diamond-norm} between the ideal displacement operation $\D^{\alpha}$ and its experimental approximation $\widetilde{\D}^{\eta, \alpha/\sqrt{1-\eta}}$. In particular, we find two different lower bounds on the Shirokov--Winter energy-constrained diamond distance  by using two different techniques. A first technique is based on the trace distance between $\I_R \otimes \D^{\alpha}(\psi_{RA})$ and $\I_R\otimes \widetilde{\D}^{\eta, \alpha/\sqrt{1-\eta}} (\psi_{RA})$ for a finite energy-constraint $E$, i.e., $\tr(\hat{n} \psi_A  )\leq E$,  where $\psi_{RA}$ is given by \eqref{eq:opt-state}. Since the Shirokov--Winter energy-constrained diamond distance, as defined in \eqref{eq:energy-constrained-diamond-norm} involves an optimization over all input states satisfying the energy constraint, we find that
\begin{multline}\label{eq:td-diamond-d}
\left\Vert \I_R \otimes \D^{\alpha}(\psi_{RA})  - \I_R\otimes \widetilde{\D}^{\eta, \frac{\alpha}{\sqrt{1-\eta}}} (\psi_{RA}) \right\Vert_1  \\
\leq \left\Vert \D^{\alpha} -  \widetilde{\D}^{\eta, \frac{\alpha}{\sqrt{1-\eta}}} \right\Vert_{\diamond E}~.
\end{multline}

A second technique is based on the numerical evaluation of the Shirokov--Winter energy-constrained diamond distance between $\D^{\alpha}$ and  $\widetilde{\D}^{\eta, \alpha/\sqrt{1-\eta}}$ on a truncated Hilbert space. In particular, we consider input states to these quantum channels such that instead of acting on an infinite-dimensional separable Hilbert space, these states act on an $M$-dimensional Fock space. Moreover, we consider a mean photon number constraint on these states. Let $\H_M$ denote an $M$-dimensional Fock space.   Let  $\underline{\hat{n}}$ denote  the following truncated number operator:
\begin{align}\label{eq:num-opt-truncated}
\underline{\hat{n}} = \sum_{n=0}^{M}  n \ket{n}\bra{n}~. 
\end{align}
Let $\varphi_A \in \D(\H_M)$. 
Then the following inequality holds: 
\begin{align}\label{eq:en-constr-truncated}
\tr(\underline{\hat{n}} \varphi_A )\leq E~,
\end{align}
where $E$ denotes the mean energy constraint.

We define the energy-constrained diamond distance between two quantum channels $\N_{A\to B}$ and $\M_{A\to B}$ on a truncated Hilbert space as
\begin{multline}\label{eq:diam-dist_truncated}
\left\Vert \N - \M \right\Vert_{\diamond E, M}\\
\equiv\sup_{\phi_{RA}\in \D(\H_M^{\otimes 2}): \tr(\underline{\hat{n}} \phi_A )\leq E  } \left\Vert \N(\phi_{RA}) - \M (\phi_{RA}) \right\Vert_1~,
\end{multline}
where $E$ and $M$ denote the mean energy constraint and the truncation parameter, respectively, and $\phi_{RA} = \ket{\phi}\bra{\phi}_{RA}$ is a purification of the state $\phi_A$. Moreover, it is implicit that the identity channel acts on the reference system $R$. Note that the following identity holds
\begin{multline}\label{eq:diam-dist_truncated-extra}
\left\Vert \N - \M \right\Vert_{\diamond E, M}\\
=\sup_{\phi_{RA}\in \D(\H_M^{\otimes 2}): \tr(\hat{n} \phi_A )\leq E  } \left\Vert \N(\phi_{RA}) - \M (\phi_{RA}) \right\Vert_1~,
\end{multline}
where we have replaced $\underline{\hat{n}}$ with $\hat{n}$, following as a consequence of the reduced state of $\phi_{RA}\in \D(\H_M^{\otimes 2})$ on $A$ having support only on the truncated space and from the Schmidt decomposition, implying that the reference system $R$ need only have support as large as the input space $A$.

We now show that the set of  density operators  acting on a truncated Hilbert space with a finite mean energy constraint (yet an arbitrarily high truncation parameter)  is dense in the set of density operators acting on an infinite-dimensional Hilbert space and with the same mean energy constraint. In other words, any finite mean-energy state acting on an infinite-dimensional separable Hilbert space can be approximated with an arbitrary accuracy by a state with the same finite mean-energy acting on a truncated Hilbert space with a sufficiently high value of the truncation parameter. Let $\rho_{RA}$ denote a density operator acting on an infinite-dimensional separable Hilbert space, such that $\tr(\hat{n}_A \rho_{RA}) \leq E$, where $E>0$.  Let $\Pi_A^M$ denote an $M$-dimensional projector defined as
\begin{align}
\Pi^M_A = \sum_{n=0}^{M} \ket{n}\bra{n}~. 
\end{align}	

Consider the following chain of inequalities: 
\begin{align}
\tr(\Pi^M_A \rho_{RA})  &=\tr(\rho_{RA}) -  \sum_{n=M+1}^{\infty} \langle n \vert \rho_A \vert n \rangle~ \\
& \geq 1- \sum_{n=M+1}^{\infty} \frac{n}{M+1} \langle n \vert \rho_A \vert n \rangle~\\
& \geq 1- \frac{1}{M+1}\bigg(\sum_{n =0}^{\infty} n \langle n \vert \rho_A \vert n \rangle    \bigg)\\
& \geq 1- \frac{E}{M+1}~. \label{eq:trunc-proj-expec}
\end{align}
 The first inequality follows from the fact that $n/\left(M+1\right) \geq 1$ for all  $n \in [M+1, \infty)$. The second inequality follows because $\sum_{n=0}^{M} n \langle n \vert \rho_A \vert n \rangle$ is a sum of positive numbers. The last inequality follows because 
$\tr(\hat{n}_A \rho_{A}) \leq E$. We note that \eqref{eq:trunc-proj-expec} can also be derived from the Fock cutoff lemma in \cite{KL11}.

Let $\rho^M_{RA}$ denote the following truncated state 
\begin{align}\label{eq:trun-state}
\rho^M_{RA} = \frac{\Pi^M_A \rho_{RA} \Pi^M_A}{\tr(\Pi^M_A \rho_{RA})}~. 
\end{align}
The following proposition establishes a bound on the trace distance between $\rho_{RA}$ and $\rho^M_{RA}$. 

\begin{proposition}\label{pro:gent-meas}
	Let $\rho_{RA}$ be a density operator acting on an infinite-dimensional separable Hilbert space such that $\tr(\hat{n}_A\rho_{RA})\leq E$, where $E>0$, and $\hat{n}_A$ is the number operator as defined in \eqref{eq:num-opt}. Let $\rho^M_{RA}$ be the $M$-dimensional truncation of the state $\rho_{RA}$, as defined in \eqref{eq:trun-state}. Then 
	\begin{align}
\frac{1}{2}\left\Vert \rho_{RA} - \rho^M_{RA} \right\Vert_1 \leq \sqrt{\frac{E}{M+1}}~.  
	\end{align}
\end{proposition}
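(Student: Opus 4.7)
The plan is to derive the bound directly from the gentle measurement (operator) lemma, using \eqref{eq:trunc-proj-expec} as the hypothesis. The inequality \eqref{eq:trunc-proj-expec} says precisely that $\tr(\Pi^M_A \rho_{RA}) \geq 1 - \varepsilon$ with $\varepsilon \equiv E/(M+1)$, and the state $\rho^M_{RA}$ defined in \eqref{eq:trun-state} is exactly the normalized post-measurement state associated with this projector (recall $\Pi^M_A$ is a projector, so $\sqrt{\Pi^M_A} = \Pi^M_A$). The gentle measurement lemma then gives $\tfrac{1}{2}\|\rho_{RA}-\rho^M_{RA}\|_1 \leq \sqrt{\varepsilon} = \sqrt{E/(M+1)}$, which is precisely the claim.

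Concretely, I would present it via an Uhlmann/Powers–Størmer route so that the paper stays self-contained and the constant works out cleanly. First, let $|\psi\rangle_{RAE}$ be any purification of $\rho_{RA}$. Then
\begin{equation*}
|\tilde\psi\rangle_{RAE} \equiv \frac{(\Pi^M_A \otimes I_{RE})|\psi\rangle_{RAE}}{\sqrt{\tr(\Pi^M_A \rho_{RA})}}
\end{equation*}
is a purification of $\rho^M_{RA}$. Uhlmann's theorem then gives
\begin{equation*}
F(\rho_{RA},\rho^M_{RA}) \;\geq\; |\langle\psi|\tilde\psi\rangle|^2 \;=\; \tr(\Pi^M_A \rho_{RA}) \;\geq\; 1 - \frac{E}{M+1},
\end{equation*}
where the last inequality is \eqref{eq:trunc-proj-expec}. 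Combining this with the Powers–Størmer inequality from \eqref{eq:powers} yields
\begin{equation*}
\tfrac{1}{2}\left\Vert \rho_{RA}-\rho^M_{RA}\right\Vert_1 \;\leq\; \sqrt{1 - F(\rho_{RA},\rho^M_{RA})} \;\leq\; \sqrt{\frac{E}{M+1}},
\end{equation*}
which is the desired bound.

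There is no genuinely hard step here: the substantive content has already been isolated in \eqref{eq:trunc-proj-expec} (a Markov-type tail estimate on the photon-number distribution), and the rest is a packaging of the standard gentle measurement argument. The only bookkeeping worth double-checking is that the purification $|\tilde\psi\rangle$ above does yield $\rho^M_{RA}$ upon tracing out $E$, which is immediate from $\Pi^M_A$ being a projector and acting only on $A$. One could alternatively just cite the gentle measurement lemma as formulated, e.g., in Winter (1999) or Wilde's textbook (Lemma 9.4.1), with $\Lambda = \Pi^M_A$ and $\varepsilon = E/(M+1)$; the self-contained Uhlmann derivation above is only two lines longer and makes the constant transparent.
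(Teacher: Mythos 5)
Your proof is correct and takes essentially the same route as the paper, which simply invokes \eqref{eq:trunc-proj-expec} together with the gentle measurement lemma of \cite{W99,ON07}; your Uhlmann/Fuchs--van de Graaf derivation is just that lemma written out explicitly, and the purification check and constants are all in order. The only cosmetic quibble is that the upper bound in \eqref{eq:powers} is the one the paper attributes to Uhlmann's theorem (the Powers--St{\o}rmer inequality is the lower bound), but this does not affect the argument.
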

\begin{proof}
The proof follows directly from \eqref{eq:trunc-proj-expec} and the gentle measurement lemma introduced in \cite{W99} and subsequently improved in \cite{ON07}. 
\end{proof}

\bigskip
Proposition~\ref{prop:diam-dis-trunc} below states that for low values of the mean energy constraint $E$, the Shirokov--Winter energy-constrained diamond distance between two quantum channels $\N$ and $\M$ can be estimated with an arbitrarily high accuracy by using the energy-constrained diamond distance on a truncated input Hilbert space with sufficiently high values of the truncation parameter $M$. 

\begin{proposition}\label{prop:diam-dis-trunc}
	Let $\N$ and $\M$ be quantum channels, and let $E$ be the energy constraint on the input states to these channels. Let $M$ denote the truncation parameter. Then 
	\begin{multline}
 \tfrac{1}{2}\left\Vert \N - \M \right\Vert_{\diamond E, M} \leq	 \tfrac{1}{2}\left\Vert \N - \M \right\Vert_{\diamond E}\\ \leq  \tfrac{1}{2}\left\Vert \N - \M \right\Vert_{\diamond E, M} + 2 \sqrt{\frac{E}{M+1}}~. 
	\end{multline}
\end{proposition}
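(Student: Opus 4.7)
My plan is to prove the lower bound by a short subset argument and the upper bound by approximating any feasible state with a truncated one, controlled by Proposition~\ref{pro:gent-meas}. For the lower bound, every pure state $\phi_{RA}\in\D(\H_M^{\otimes 2})$ with $\tr(\hat n\,\phi_A)\leq E$ is, after embedding $\H_M\hookrightarrow\H_A$ and $\H_M\hookrightarrow\H_R$, a feasible test state in the definition of $\|\N-\M\|_{\diamond E}$. The supremum defining $\|\N-\M\|_{\diamond E,M}$ thus runs over a subset of the one defining $\|\N-\M\|_{\diamond E}$, which immediately yields $\tfrac{1}{2}\|\N-\M\|_{\diamond E,M}\leq\tfrac{1}{2}\|\N-\M\|_{\diamond E}$.

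For the upper bound I would first dispose of the trivial case $M+1<E$, in which $2\sqrt{E/(M+1)}>2\geq\tfrac{1}{2}\|\N-\M\|_{\diamond E}$ makes the claim immediate. In the regime $M+1\geq E$, I would fix $\varepsilon>0$, select a pure near-optimizer $|\psi\rangle_{RA}$ of $\|\N-\M\|_{\diamond E}$ satisfying $\tr(\hat n_A\psi_A)\leq E$ (purity being without loss of generality by convexity of the trace norm in its first argument), and define the truncated, normalized pure state
\begin{equation}
|\psi^M\rangle_{RA}=\Pi^M_A|\psi\rangle_{RA}/\|\Pi^M_A|\psi\rangle_{RA}\|.
\end{equation}
A Schmidt-decomposition argument, combined with the identity \eqref{eq:diam-dist_truncated-extra}, then shows that $\psi^M_{RA}$ can be regarded as an element of $\D(\H_M^{\otimes 2})$, while Proposition~\ref{pro:gent-meas} yields $\tfrac{1}{2}\|\psi_{RA}-\psi^M_{RA}\|_1\leq\sqrt{E/(M+1)}$.

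The step I expect to be the main obstacle is verifying that $\psi^M_{RA}$ remains feasible for the truncated energy constraint $\tr(\hat n\,\psi^M_A)\leq E$. Setting $p_n=\langle n|\psi_A|n\rangle$, $\mu=\sum_n n p_n\leq E$, $q=\sum_{n>M}p_n$, and $\mu_M^c=\sum_{n>M}n p_n/q$ (for $q>0$; the cases $q\in\{0,1\}$ can be handled by approximation), the inequality $\mu_M^c\geq M+1\geq E\geq\mu$ together with the law of total expectation $\mu=(1-q)\mu_M+q\mu_M^c$ forces $\mu_M\leq\mu\leq E$. Once feasibility is secured, I would conclude by a triangle inequality together with monotonicity of the trace norm under the CPTP maps $\N$ and $\M$:
\begin{multline}
\|\N(\psi_{RA})-\M(\psi_{RA})\|_1 \leq 2\|\psi_{RA}-\psi^M_{RA}\|_1 \\
+\|\N(\psi^M_{RA})-\M(\psi^M_{RA})\|_1 \leq 4\sqrt{\tfrac{E}{M+1}}+\|\N-\M\|_{\diamond E,M},
\end{multline}
and dividing by $2$ and sending $\varepsilon\to 0$ would complete the proof.
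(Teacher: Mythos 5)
Your proof follows essentially the same route as the paper's: the lower bound by a subset argument, and the upper bound by truncating a (near-)optimal input with $\Pi^M_A$ and then combining the triangle inequality, monotonicity of the trace norm under $\N$ and $\M$, and Proposition~\ref{pro:gent-meas}. The one place you go beyond the paper is the explicit verification that the truncated state remains feasible, i.e.\ $\tr(\hat n\,\psi^M_A)\leq E$ (your conditional-expectation argument in the regime $M+1\geq E$, plus the trivial case $M+1<E$); the paper's proof invokes $\left\Vert \N(\rho^M_{RA})-\M(\rho^M_{RA})\right\Vert_1\leq\left\Vert \N-\M\right\Vert_{\diamond E,M}$ without justifying this feasibility, so your addition closes a small implicit gap rather than changing the argument.
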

\begin{proof}
The inequality $\left\Vert \N - \M \right\Vert_{\diamond E, M} \leq \left\Vert \N - \M \right\Vert_{\diamond E}$ follows from  \eqref{eq:energy-constrained-diamond-norm} and \eqref{eq:diam-dist_truncated-extra}.

We now prove the other inequality.  
Let $\rho_{RA}$ be a density operator acting on an infinite-dimensional 
separable Hilbert space such that $\tr(\hat{n}_A \rho_{RA})\leq E$. Let $\rho_{RA}^M$ be the $M$-dimensional truncation of the state $\rho_{RA}$ as defined in \eqref{eq:trun-state}. 
Consider the following chain of inequalities:
\begin{align}
&\left\Vert \N(\rho_{RA}) - \M(\rho_{RA})\right\Vert_1 \nonumber\\ 
&\leq \left\Vert \N(\rho_{RA}) - \N(\rho^M_{RA})\right\Vert_1+\left\Vert \N(\rho^M_{RA}) - \M(\rho^M_{RA})\right\Vert_1\nonumber \\
& \qquad \qquad \qquad \qquad+\left\Vert \M(\rho^M_{RA}) - \M(\rho_{RA})\right\Vert_1\\
& \leq 2 \left\Vert \rho_{RA} - \rho^M_{RA} \right\Vert_1 + \left\Vert \N(\rho^M_{RA}) - \M(\rho^M_{RA})\right\Vert_1\\
& \leq 4 \sqrt{\frac{E}{M+1}} + \left\Vert \N - \M\right\Vert_{\diamond E, M}~. 
\end{align}
In all the steps above, it is implicit that the identity channel acts on the reference system $R$. The first inequality is the consequence of triangle inequality for the trace distance. The second inequality follows from monotonicity of the trace distance. The last inequality follows from Proposition \ref{pro:gent-meas} and from \eqref{eq:diam-dist_truncated}. Since the chain of inequalities holds for all input states $\rho_{RA}$ satisfying the energy constraint, the desired result follows. 
\end{proof}

\bigskip
We now study the aforementioned two techniques in detail to characterize the performance of the simulation of an ideal displacement operator. 
It is evident from Figure \ref{fig:energy-constraind-displacement} that for a fixed value of $\eta$, the accuracy in simulating an ideal displacement operation $
\D^{\alpha}$ by using the protocol from \cite{P96} is reasonable only for low values of the energy constraint on input states. Therefore, we now study  the simulation of $\D^{\alpha}$ in detail only for low values of the energy constraint. 

Let $0<E<1$. Then 
\begin{multline}\label{eq:trace-distance-for-displacement}
\frac{1}{2} \left\Vert  \I_R \otimes \D^{\alpha}(\psi_{RA})  - \I_R\otimes \widetilde{\D}^{\eta, \frac{\alpha}{\sqrt{1-\eta}}} (\psi_{RA})\right\Vert_1 \\
= \frac{1}{2}[\{E\}(1-\eta) +(1-\sqrt{\eta})\varkappa(\eta, \{E\})] 
\equiv  d_1(\eta, E)~,
\end{multline}
where
\begin{equation}
\varkappa(\eta, \{E\}) = \sqrt{\{E\}(4+\{E\}(\eta+2\sqrt{\eta}-3))},
\end{equation}
$\{E\} = E- \lfloor E\rfloor$, and  $\psi_{RA}$ is given by \eqref{eq:opt-state} (see the appendices and \cite{Mathematica} for a detailed proof to obtain \eqref{eq:trace-distance-for-displacement}). Therefore, from \eqref{eq:td-diamond-d}, it follows that \eqref{eq:trace-distance-for-displacement} is a lower bound on the Shirokov--Winter energy-constrained diamond distance between $\D^{\alpha}$ and $\widetilde{\D}^{\eta, \alpha/\sqrt{1-\eta}}$ for $0<E<1$, i.e., 
\begin{align}
d_1(\eta, E) \leq \frac{1}{2}\left\Vert \D^{\alpha} -  \widetilde{\D}^{\eta, \frac{\alpha}{\sqrt{1-\eta}}} \right\Vert_{\diamond E}~.
\end{align}

As discussed earlier, a second method to obtain a lower bound on the Shirokov--Winter energy-constrained diamond distance between two quantum channels $\N_{A\to B}$ and $\M_{A\to B}$ is to truncate the infinite-dimensional separable Hilbert space to a finite-dimensional Hilbert space and apply energy constraints on channel input states according to the truncated number operator, as defined in \eqref{eq:num-opt-truncated}--\eqref{eq:en-constr-truncated}. In particular, we obtain the energy-constrained diamond distance between $\N_{A\to B}$ and $\M_{A\to B}$ on a truncated Hilbert space  by using a semi-definite program (SDP) from \cite{W17}, which is inspired from an SDP defined in the context of finite-dimensional quantum channels in  \cite{JW09,JW13}. We use the following SDP to estimate the Shirokov--Winter energy-constrained diamond distance between two quantum channels $\N_{A\to B}$ and $\M_{A\to B}$:
\begin{equation}\label{eq:SDP}
\left\Vert \N - \M \right\Vert_{\diamond E, M} = \left\{\begin{array}{l l}\sup & \tr(W_{RB} J_{RB})\\
\text{subject to} & 0 \leq W_{RB} \leq \rho_R \otimes \mathbb{I}_B ,\\
& \tr( \rho_R ) =1, \ \rho_R \geq 0, \\
& \tr( \underline{\hat{n}} \rho_R  ) \leq E,
\end{array}\right.
\end{equation}
where $M$ is the truncation parameter, $E$ is the mean energy-constraint parameter, and $\underline{\hat{n}}$ is given by \eqref{eq:num-opt-truncated}. Moreover, $J_{RB}$ denotes the operator corresponding to the difference of the Choi operators of quantum channels $\N$ and $\M$ on the truncated Hilbert space $\H_M$ and  is defined as follows
\begin{equation}
J_{RB} = (\I_R \otimes \N_{A\to B}) (\Gamma_{RA}) - (\I_R \otimes \M_{A\to B})  (\Gamma_{RA}),
\end{equation}
where $\Gamma_{RA} = \ket{\Gamma}\bra{\Gamma}_{RA}$ is the projection onto the unnormalized maximally entangled vector on the truncated Hilbert space $\H_M$, i.e., 
\begin{align}
\ket{\Gamma}_{RA} = \sum_{n=0}^M \ket{n}_R \ket{n}_A~.
\end{align}

For a small value of the energy-constraint parameter~$E$, the truncation parameter $M$ can be chosen such that the value of $\Vert \N - \M \Vert_{\diamond E, M}$ does not change significantly by increasing $M$ further. For example, in the context of the ideal displacement operation $\D^{\alpha}$ and its experimental approximation $\widetilde{\D}^{\eta, \alpha/\sqrt{1-\eta}}$, we find that for $E\ll 1$, the truncation parameter $M=6$ provides a good estimate of $\left\Vert \D^{\alpha} -  \widetilde{\D}^{\eta, \frac{\alpha}{\sqrt{1-\eta}}} \right\Vert_{\diamond E}$. For $E\ll1$, we define  
\begin{align}\label{eq:d2}
d_2(\eta, E) \equiv \frac{1}{2}\left\Vert \D^{\alpha} -  \widetilde{\D}^{\eta, \frac{\alpha}{\sqrt{1-\eta}}} \right\Vert_{\diamond E,M}~,
\end{align}
for $M=6$. From Proposition \ref{prop:diam-dis-trunc} it follows that 
\begin{align}
d_2(\eta, E)\leq  \frac{1}{2}\left\Vert \D^{\alpha} -  \widetilde{\D}^{\eta, \frac{\alpha}{\sqrt{1-\eta}}} \right\Vert_{\diamond E}.
\end{align}

\begin{figure}[ptb]
	\includegraphics[
	width=3.4in
	]{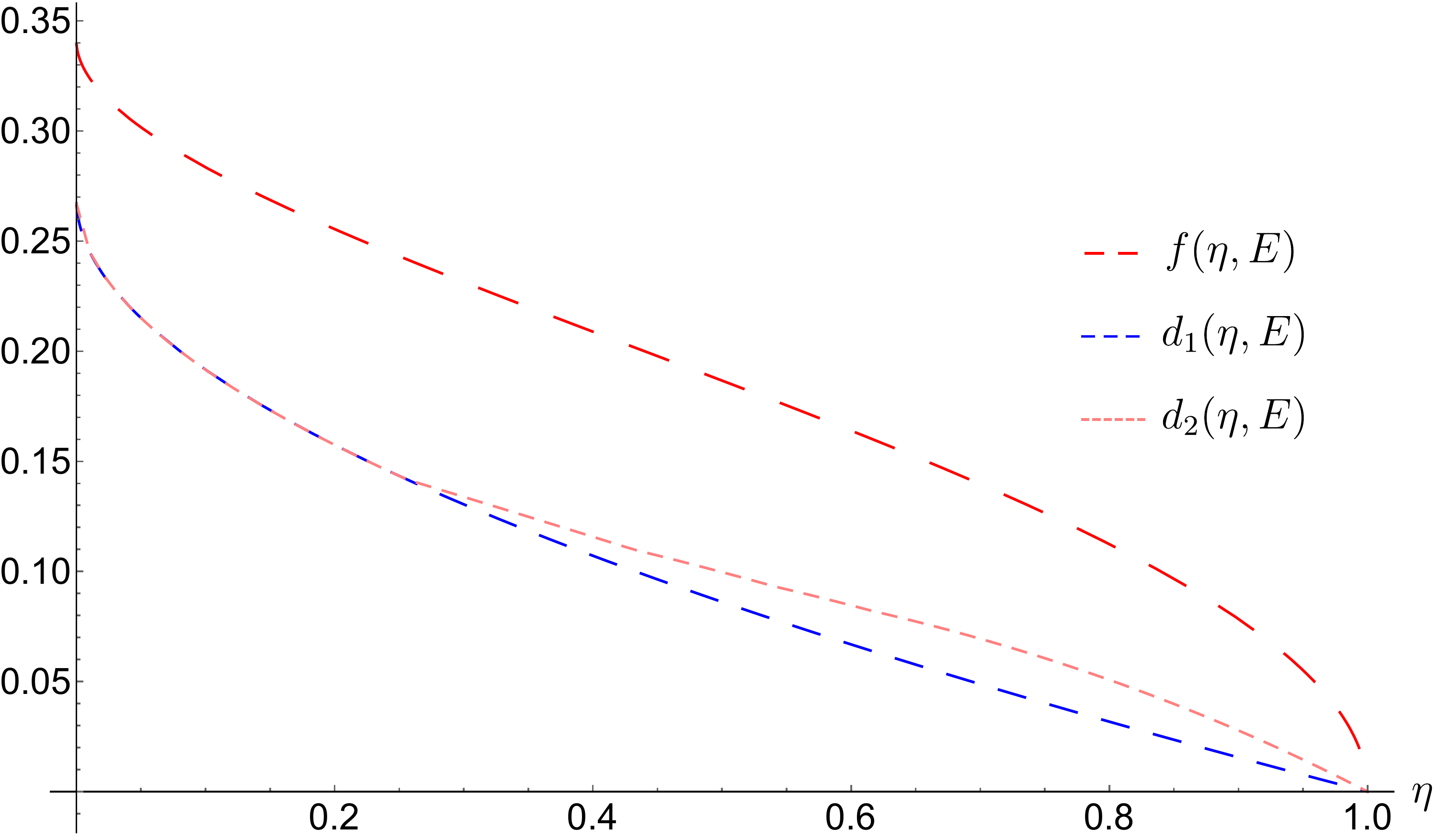}
	\caption{The figure depicts the lower bound $d_1(\eta, E)$ in  \eqref{eq:trace-distance-for-displacement}, the lower bound $d_2(\eta, E)$ in \eqref{eq:d2}, and the upper bound $f(\eta, E)$ in \eqref{eq:energy-constrained-sd-displacement} for the fixed value $E=0.06$. Here, $d_1(\eta, E)$ is the trace distance between the outputs of an ideal displacement $\D^{\alpha}(\psi_{RA})$ and its experimental approximation $\widetilde{\D}^{\eta, \alpha/\sqrt{1-\eta}}(\psi_{RA})$, when the input state $\psi_{RA}$ is such that it optimizes the energy-constrained sine distance between $\D^{\alpha}$ and $\widetilde{\D}^{\eta, \alpha/\sqrt{1-\eta}}$ and is given by \eqref{eq:opt-state}. Moreover, $d_2(\eta, E)$ is the energy-constrained diamond distance between $\D^{\alpha}$ and $\widetilde{\D}^{\eta, \alpha/\sqrt{1-\eta}}$ on a truncated Hilbert space with the truncation parameter $M=6$, and $f(\eta, E)$ is the energy-constrained sine distance between $\D^{\alpha}$ and $\widetilde{\D}^{\eta, \alpha/\sqrt{1-\eta}}$. 
	For  low values of $\eta$,	$d_1(\eta, E)$ is close to  $d_2(\eta, E)$. The figure indicates that  for a fixed value of $E$,  high accuracy in simulating $\D^{\alpha}$ can be achieved only for high values of $\eta$.
		\label{fig:comparison}
}\end{figure}

Let us study in detail the case when the input states have mean energy constraint $E =0.06$. We first calculate $d_1(\eta, E)$ by using \eqref{eq:trace-distance-for-displacement} and then find $d_2(\eta, E)$, as defined in \eqref{eq:d2} by solving the corresponding SDP in \eqref{eq:SDP} \cite{Mathematica}. We then compare both $d_1(\eta, E)$ and $d_2(\eta, E)$ with the energy-constrained sine distance $f(\eta, E)$ between $\D^{\alpha}$ and $\widetilde{\D}^{\eta, \alpha/\sqrt{1-\eta}}$, as calculated in \eqref{eq:energy-constrained-sd-displacement}.

In Figure~\ref{fig:comparison}, we plot the lower bound $d_1(\eta, E)$ in \eqref{eq:trace-distance-for-displacement}, the lower bound $d_2(\eta, E)$ in \eqref{eq:d2}, and the upper bound $f(\eta, E)$ in \eqref{eq:energy-constrained-sd-displacement} versus $\eta$ for $E=0.06$. In particular, we find that $d_1(\eta, E)$ overlaps with $d_2(\eta, E)$ for small values of $\eta$. From numerical evaluations, we find that the value of $d_2(\eta, E)$ does not change significantly with a further increment in $M\geq6$. These findings indicate that  $d_2(\eta, E)$ is a good lower bound on the Shirokov--Winter energy-constrained diamond distance between $\D^{\alpha}$ and $\widetilde{\D}^{\eta, \alpha/\sqrt{1-\eta}}$, and furthermore, that the upper bound in Proposition~\ref{prop:diam-dis-trunc} is loose for this case. Moreover, from Figure~\ref{fig:comparison}, it is evident that $d_1(\eta, E)$ is also a tight lower bound. Although there is a significant gap between $f(\eta, E)$ and $d_2(\eta, E)$ in Figure~\ref{fig:comparison}, the key message of our results remains the same; i.e., in order to achieve a high accuracy in simulating an ideal displacement operation $\D^{\alpha}$ by using the protocol from \cite{P96}, the value of $\eta$ should be very high and the mean energy of the input states should be very low. In summary, a good estimation of the accuracy in simulating an ideal displacement operation can be obtained from the following three methods:
\begin{enumerate}
	\item The energy-constrained sine distance between $\D^{\alpha}$ and $\widetilde{\D}^{\eta, \alpha/\sqrt{1-\eta}}$ can be calculated from the analytical expression obtained in \eqref{eq:energy-constrained-sd-displacement}. 
	\item A lower bound on the Shirokov--Winter energy-constrained diamond distance between $\D^{\alpha}$ and $\widetilde{\D}^{\eta, \alpha/\sqrt{1-\eta}}$ can be established by solving an SDP in \eqref{eq:SDP} on a truncated Hilbert space \cite{Mathematica} . 
	\item For a fixed energy range $\lfloor E \rfloor \leq E\leq \lceil E\rceil$, a lower bound on the Shirokov--Winter energy-constrained diamond distance between $\D^{\alpha}$ and $\widetilde{\D}^{\eta, \alpha/\sqrt{1-\eta}}$ can be established by finding the trace distance between $\I_R \otimes \D^{\alpha}(\psi_{RA})$ and $\I_R\otimes \widetilde{\D}^{\eta, \alpha/\sqrt{1-\eta}} (\psi_{RA})$, where $\psi_{RA}$ is given by \eqref{eq:opt-state}. In particular, for $0<E<1$, an analytical expression for the trace distance is given by \eqref{eq:trace-distance-for-displacement} (see the appendices for more information).
\end{enumerate}

\section{Approximation of a beamsplitter}\label{sec:bs-approx}

In this section, we analyze convergence of the experimental implementations of a beamsplitter transformation. A beamsplitter consists of a semi-reflective mirror, which both partly reflects and transmits the input radiation. In general the unitary operator corresponding to the beamsplitter transformation is given by \cite{LOQC07}
\begin{equation}\label{eq:bs-unitary}
U^{\theta, \phi}_{\text{BS}} \equiv \exp[i\theta (e^{i \phi} \hat{a}_{\text{in}}^{\dagger}\hat{b}_{\text{in}} +  e^{-i \phi} \hat{a}_{\text{in}}\hat{b}_{\text{in}}^{\dagger})],
\end{equation}
where $\hat{a}_{\text{in}}$ and $\hat{b}_{\text{in}}$ denote the two incoming modes on either side of the beamsplitter, and $\theta$ depends on the interaction time and coupling strength of semi-reflective mirrors. Moreover, $\phi$ denotes the relative phase shift parameter. 
Another representation of a beamsplitter is given in terms of the transmissivity $\eta$ of the beamsplitter, where $\eta = (\cos{\theta})^2$. In this work, we parametrize a beamsplitter with respect to $\eta$ and $\theta$ interchangeably. 

Let $\rho_{A_1A_2}$ be a two-mode input quantum state, and let $\B^{\eta, \phi}$ denote the beamsplitter transformation of transmissivity $\eta\in (0, 1)$ and phase $\phi \in [0, 2\pi]$ acting on mode $A_1$ and $A_2$, i.e., 
\begin{equation}
    \B^{\eta, \phi}(\rho_{A_1A_2}) = U^{\theta, \phi}_{\text{BS}}(\rho_{A_1A_2}) (U^{\theta, \phi}_{\text{BS}})^{\dagger},
\end{equation}
where $\eta = (\cos{\theta})^2$.

There are at least two different ways to model the noise in implementing $\B^{\eta, \phi}$. A first method consists of an ideal beamsplitter preceded and followed by a tensor product of pure-loss channels with transmissivity $\eta'$.  We denote the channel corresponding to the experimental implementation of the beamsplitter transformation $\B^{\eta, \phi}$ by 
$\widetilde{\B}^{\eta, \phi, \eta'}$, such that 
\begin{equation}\label{eq:bs-approx-1}
\widetilde{\B}^{\eta, \phi, \eta'}(\rho_{A_1A_2}) \equiv (\L^{\eta'}_{A_1} \otimes \L^{\eta'}_{A_2})\circ\B^{\eta, \phi} \circ (\L^{\eta'}_{A_1} \otimes \L^{\eta'}_{A_2})(\rho_{A_1A_2}).
\end{equation}
This models the physical process when there is a non-zero probability of absorption of the radiation, along with reflection and transmission. We note that the beam splitter channel with transmissivity $\eta$ and the channel corresponding to a tensor product of two pure-loss channels with equal transmissivity $\eta'$ commute, i.e., 
\begin{equation}
(\L^{\eta'}_{A_1} \otimes \L^{\eta'}_{A_2})\circ\B^{\eta, \phi} (\cdot) = \B^{\eta, \phi} \circ (\L^{\eta'}_{A_1} \otimes \L^{\eta'}_{A_2}) (\cdot).
\end{equation}
Since a concatenation of two pure-loss channels $\L^{\eta_1}$  and $\L^{\eta_2}$ is another pure-loss channel $\L^{\eta_1\eta_2}$ with transmissivity $\eta_1\eta_2$, we consider the following channel as an experimental approximation of the beamsplitter:
\begin{equation}
\widetilde{\B}^{\eta, \phi, \eta'}(\rho_{A_1A_2}) \equiv \B^{\eta, \phi} \circ (\L^{\eta'}_{A_1} \otimes \L^{\eta'}_{A_2})(\rho_{A_1A_2}),
\end{equation}

Now let $\psi_{RA_1A_2}$ denote an arbitrary four-mode pure state, where it is understood that $R$ is a two-mode system.  
From invariance of the fidelity under a unitary transformation, we find that
\begin{multline}\label{eq:fid-ideal-BS-noisy-BS}
F(\B^{\eta,\phi}(\psi_{RA_1A_2}), \widetilde{\B}^{\eta,\phi, \eta'}(\psi_{RA_1A_2})) \\= F(\psi_{RA_1A_2},(\L^{\eta'}_{A_1} \otimes \L^{\eta'}_{A_2})(\psi_{RA_1A_2}) ),
\end{multline}
where it is implicit that the identity channel acts on the reference system $R$. 

From arguments similar to those given in Section \ref{sec:uniform-conv-displacement}, we find that the sequence $\{\widetilde{\B}^{\eta, \phi, \eta'}  \}_{\eta' \in [0, 1)}$ does not converge uniformly to $\B^{\eta, \phi}$. In particular, let $\psi_{A_1A_2} =\ket{\delta}\bra{\delta}_{A_1} \otimes \ket{0}\bra{0}_{A_2}$ be a tensor product of a  coherent state and a vacuum state. Then we find that 
\begin{multline}
F(\B^{\eta, \phi}(\psi_{A_1A_2}), \widetilde{\B}^{\eta, \phi, \eta'}(\psi_{A_1A_2})) \\=  \exp[- (\vert \delta \vert^2 (1-\sqrt{\eta'})^2)/2] 
\end{multline}
Therefore, from arguments similar to \eqref{eq:uniform-convergence-displacement} and \eqref{eq:uniform-convergence-displacement-trace-distance}, it follows that the ideal beamsplitter $\B^{\eta, \phi}$ and its experimental approximation $\widetilde{\B}^{\eta, \phi, \eta'}$ become perfectly distinguishable in the limit that the input state has unbounded
energy. Hence the uniform convergence does not hold. 

We now argue that the sequence converges $\{\B^{\eta, \phi, \eta'}  \}_{\eta'\in [0, 1)}$ to  $\B^{\eta, \phi}$ in the strong sense. Let $\chi_{\rho_{A_1A_2}}(x_1, p_1, x_2, p_2)$ denote the Wigner characteristic function for the input state $\rho_{A_1A_2}$. 
Let $\tilde{\rho}^{\text{out}}_{A_1A_2}(\eta, \phi, \eta')$ denote the state after the action of $\widetilde{\B}^{\eta, \phi, \eta'}$ on $\rho_{A_1A_2}$. 
Then for each $\rho_{A_1A_2}\in \D(\H_{A_1}\otimes \H_{A_2})$, and for all $x_1, p_1, x_2, p_2 \in \mathbb{R}$, we have that
\begin{multline} \label{eq:strong-conv-bs1} 
\lim_{\eta' \to 1} \chi_{\tilde{\rho}^{\text{out}}_{A_1A_2}(\eta, \phi, \eta')} (x_1, p_1, x_2, p_2) \\
= \chi_{\B^{\eta, \phi}(\rho_{A_1A_2})}(x_1, p_1, x_2, p_2).
\end{multline} 
We have thus shown that the sequence of characteristic functions $\chi_{\tilde{\rho}^{\text{out}}_{A_1A_2}(\eta, \phi,  \eta')}$ converges pointwise to $\chi_{\B^{\eta,\phi}(\rho_{A_1A_2})}$, which implies
by \cite[Lemma~8]{LSW18}
that the sequence $\{\widetilde{\B}^{\eta,\phi, \eta'}\}_{\eta' \in[0,1)}$  converges to $\B^{\eta, \phi}$ in the strong sense (see Appendix \ref{app:bs} for more details).

Similar to Section \ref{sec:SW-energy-constrained-dis-op}, we investigate the dependence of convergence of the sequence of $\{\widetilde{\B}^{\eta, \phi, \eta'}  \}_{\eta' \in [0, 1)}$ to  $\B^{\eta, \phi}$ on the experimental parameters when there is a finite energy constraint on the input states. Let $A = A_1A_2$. 
Consider the following chain of inequalities:
\begin{align}
\frac{1}{2}&\Vert\B^{\eta, \phi} - \widetilde{\B}^{\eta, \phi, \eta'
} \Vert_{\diamond E} \nonumber \\
&\leq \sup_{\psi_{RA}: \tr(H_{A}\psi_{A})\leq E} C(\psi_{RA}, (\L^{\eta'}_{A_1}\otimes \L^{\eta'}_{A_2})(\psi_{RA}))\\
& = f(\eta', E),
\end{align}
where $H_{A} = H_{A_1} \otimes I_{A_2} + I_{A_1}\otimes H_{A_2} $ and  $f(\eta', E)$ is given by \eqref{eq:energy-constrained-sd-displacement}.
The first inequality follows  from \eqref{eq:powers}. The last inequality  follows from the recent result of \cite{N18}, which holds for a tensor product of loss channels with the same transmissivity. Therefore, from the analysis in Section \ref{sec:SW-energy-constrained-dis-op}, it follows that the accuracy in implementing $\B^{\eta, \phi}$ using $\B^{\eta, \phi, \eta'}$ is high only for high values of the loss parameter $\eta'$ and low values of the energy constraint $E$.

A second experimental approximation of an ideal beamsplitter is a phenomenological model that accounts for the imprecision in implementing $\B^{\theta, \phi}$ with an exact value of the parameters $\theta$ and $\phi$, as defined in \eqref{eq:bs-unitary}.  For the analysis that follows, we fix $\phi = 0$, which is typically considered in experiments. We denote the ideal beamsplitter for $\phi=0$ by $\B^{\theta}$. We note that a similar analysis follows for $\phi = \pi/2$.

The channel corresponding to an experimental approximation of $\B^{\theta}$ is given by 
\begin{equation}\label{eq:bs-approx2}
\widetilde{\B}^{\theta, \sigma} \equiv \int_{0}^{2\pi} d\theta'~p(\theta', \theta, \sigma, 0, 2\pi) \B^{\theta'},
\end{equation}
where $p(\theta',\theta,  \sigma, a , b)$ is a truncated normal distribution with location parameter $\theta$, scale parameter $\sigma$, truncation range in between $a$ and $b$, and is given by
\begin{equation}\label{eq:truncated-uni-prob}
p(\theta', \theta, \sigma, a, b) = \frac{\phi((\theta'-\theta)/\sigma)}{\sigma[\Phi((b-\theta)/\sigma)-\Phi((a-\theta)/\sigma)]},
\end{equation}
where
\begin{align}
\phi(x) = \frac{1}{\sqrt{2\pi}}\exp(-x^2/2),\\
\Phi(x) = \frac{1}{2}(1+ \text{erf}(x/\sqrt{2})),
\end{align}
and $\text{erf}(x)$ is the error function. 

Now let $\psi_{A_1A_2}=\ket{\alpha}\bra{\alpha}_{A_1}\otimes  \ket{0} \bra{0}_{A_2}$ be an input state, where $\ket{\alpha}$ is a coherent state. From \eqref{eq:bs-approx2} it follows that 
\begin{align}\label{eq:fid-bs-bsapprox-uniform}
&F(\B^{\theta}(\psi_{A_1A_2}),\widetilde{\B}^{\theta, \sigma}(\psi_{A_1A_2}) )\nonumber \\&= \int_0^{2\pi} d\theta'~p(\theta', \theta, \sigma, 0, 2\pi) F(\B^{\theta}(\psi_{A_1A_2}),\B^{\theta'}(\psi_{A_1A_2}) ). 
\end{align}

Now consider that  
\begin{align}
F(\B^{\theta}(\psi_{A_1A_2}),&\B^{\theta'}(\psi_{A_1A_2}) ) \nonumber\\
&= F(\psi_{A_1A_2}, \B^{(-\theta)}\circ \B^{\theta'}(\psi_{A_1A_2}))\\
& = \exp[-2\vert \alpha \vert^2(\sin(\theta'-\theta))^2], 
\end{align}
which converges to zero as $\vert \alpha \vert^2\to \infty$. Then from \eqref{eq:fid-bs-bsapprox-uniform} and by an application of the dominated convergence theorem, it follows that the sequence $\{\widetilde{\B}^{\theta, \sigma} \}_{\sigma\in [0, \infty)}$ does not converge uniformly to the ideal beamsplitter $\B^{\theta}$.

We now show that convergence occurs in the strong sense. Let $\psi_{RA_1A_2}$ denote the input state. Then the following holds:
\begin{multline}\label{eq:fid-bs2-approx}
F(\B^{\theta}(\psi_{RA_1A_2}), \widetilde{\B}^{\theta, \sigma}(\psi_{RA_1A_2}) )  \\
 = \int_{0}^{2\pi}d\theta' p(\theta',\theta, \sigma, 0, 2\pi) F(\B^{\theta}(\psi_{RA_1A_2}), \B^{\theta'}(\psi_{RA_1A_2}))~.
\end{multline}
Since $\lim_{\sigma \to 0} p(\theta',\theta, \sigma) = \delta(\theta - \theta')$, it follows that
\begin{equation}\label{eq:bs2-approx-strong-conv}
\lim_{\sigma \to 0} F(\B^{\theta}(\psi_{RA_1A_2}), \widetilde{\B}^{\theta, \sigma}(\psi_{RA_1A_2}) ) = 1,     
\end{equation}
which implies that the sequence $\{ \widetilde{\B}^{\theta, \sigma} \}_{\sigma \in [0, \infty)}$ converges strongly to $\B^{\theta}$.

We now investigate convergence of the sequence  $\{\widetilde{\B}^{\theta, \sigma}\}_{\sigma\in [0, \infty)}$ to $\B^{\theta}$ in terms of the experimental parameters when there is a finite energy constraint on the input states. Consider the following chain of inequalities:
\begin{align}
\frac{1}{2}&\Vert \B^{\theta} - \widetilde{\B}^{\theta, \sigma} \Vert_{\diamond E}\nonumber\\
 &\leq \frac{1}{2}\int_{0}^{2\pi} d\theta'\ p(\theta', \theta,\sigma, 0, 2\pi) \Vert \B^{\theta} - \B^{\theta'}\Vert_{\diamond E}\\
& = \frac{1}{2}\int_{0}^{2\pi} d\theta'\ p(\theta',\theta,\sigma, 0, 2\pi) \Vert \I -  \B^{\theta'-\theta} \Vert_{\diamond E}\\
& \leq\int_{0}^{2\pi} d\theta'\ p(\theta',\theta,\sigma, 0, 2\pi) 2 \sqrt{E\vert\theta'-\theta\vert}~.\label{eq:bound-bs-approx2}
\end{align}
The first inequality follows from convexity of the trace distance. The first equality follows the unitary invariance of the trace distance. The last inequality follows from \cite[Proposition~3.2]{BD18}.

We denote the upper bound on the Shirokov--Winter energy-constrained diamond distance between $ \B^{\theta}$ and  $\widetilde{\B}^{\theta, \sigma} $, obtained in \eqref{eq:bound-bs-approx2} by $g(\theta, \sigma, E)$:
\begin{equation}\label{eq:bound-bs-approx2-g}
g(\theta, \sigma, E) = \int_{0}^{2\pi} d\theta' p(\theta',\theta,\sigma, 0, 2\pi) 2 \sqrt{E\vert\theta'-\theta\vert}.
\end{equation}

\begin{figure}[ptb]
	\includegraphics[
	width=3.4in
	]{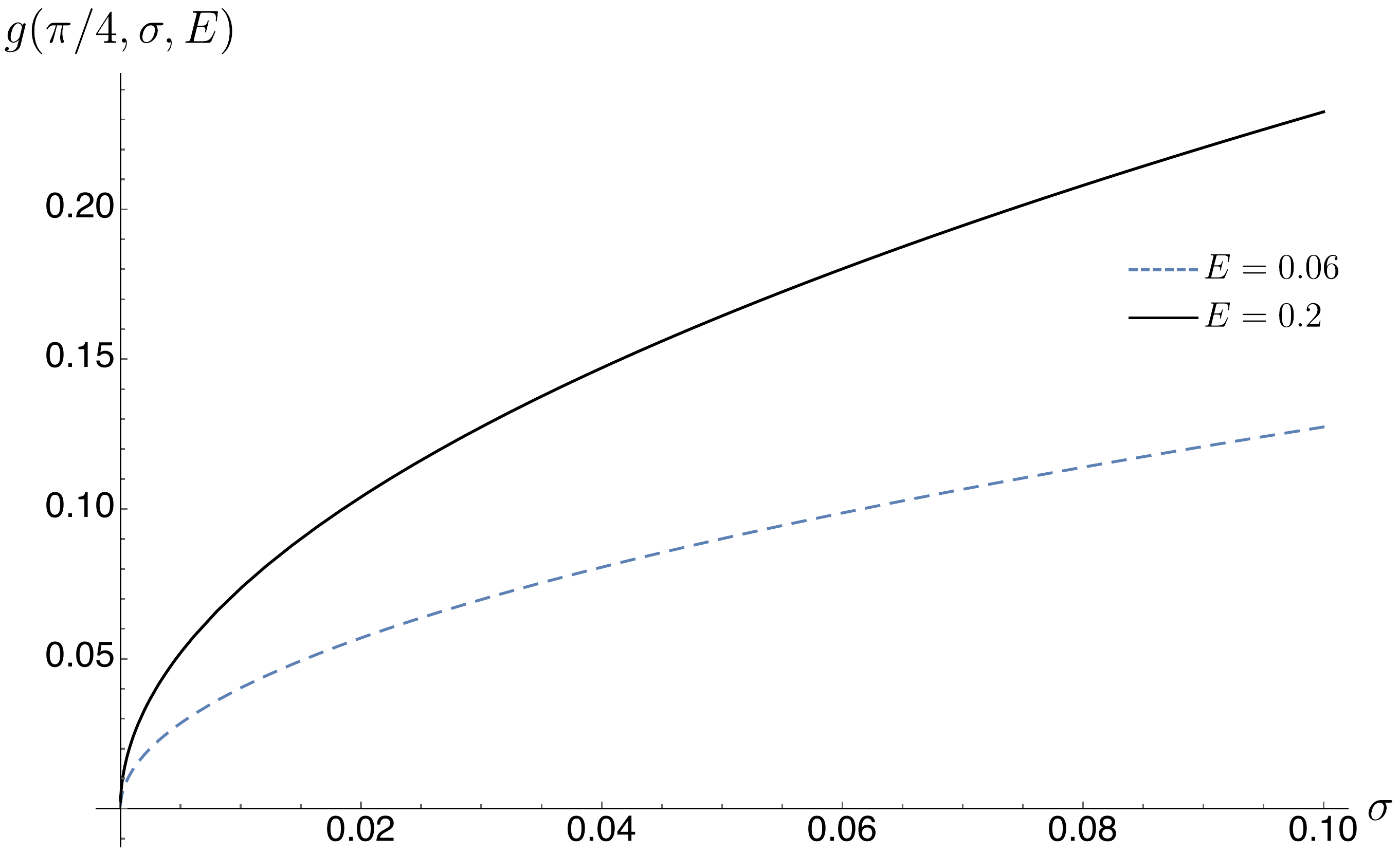}
	\caption{The figure depicts the upper bound $g(\theta, \sigma, E)$ in  \eqref{eq:bound-bs-approx2-g}  for a fixed value of the beamsplitter parameter $\theta = \pi/4$ and for two different values of the energy-constraint parameter  $E=0.06$ and $E=0.2$. Here, $g(\theta, \sigma, E)$ is an upper bound on the Shirokov--Winter energy-constrained diamond distance between an ideal beamsplitter and its experimental approximation in \eqref{eq:bs-approx2}. The figure indicates that  for a fixed value of $\sigma$ in \eqref{eq:truncated-uni-prob},  high accuracy in simulating $\B^{\theta}$ can be achieved only for low values of the energy-constraint parameter $E$.
		\label{fig:bs-bound}
}\end{figure}

In Figure \ref{fig:bs-bound}, we plot $g(\theta, \sigma, E)$ versus $\sigma$ for certain values of the energy constraint $E$ and for $\theta = \pi/4$, which corresponds to the simulation of a balanced beamsplitter. In particular, we find that for all values of $E$, the experimental approximation $\widetilde{\B}^{\theta, \sigma}$ simulates the ideal beamsplitter $\B^{\theta}$ with a high accuracy for $\sigma \approx 0$. Moreover, for a fixed value of $\sigma$, the simulation of $\B^{\theta}$ is more accurate for low values of the energy constraint on input states. 

\section{Approximation of a phase rotation}\label{sec:pr-approx}

In this section, we analyze convergence of the experimental implementation of a phase rotation. In general, the unitary operator corresponding to the phase rotation is given by 
\begin{equation}
    U^{\phi}_{\text{PR}} = \exp(i \hat{n} \phi), 
\end{equation}
where $\hat{n}$ denotes the number operator. 

Similar to Section \ref{sec:bs-approx}, there are at least two ways to model the noise in implementing the unitary channel $\U^{\phi}_{\text{PR}}(\cdot)\equiv U^{\phi}_{\text{PR}}(\cdot) (U^{\phi}_{\text{PR}})^\dag$. A first model consists of sending the input state through a pure-loss channel, followed by the ideal phase rotation. This models the case when some photons are lost in the medium used to implement the phase rotation. We denote the channel corresponding to such an approximation of the ideal phase rotation by $\widetilde{\U}^{\phi, \eta}_{\text{PR}}$, such that
\begin{equation}
\widetilde{\U}^{\phi,\eta}_{\text{PR}}(\rho) \equiv (\U^{\phi}_{\text{PR}}\circ \L^{\eta})(\rho)~.
\end{equation}
Now let $\psi_{RA}$ denote an arbitrary two-mode pure state. Then from unitary invariance of fidelity, we find that 
\begin{equation}
    F(\U^{\phi}_{\text{PR}}(\psi_{RA}), \widetilde{\U}^{\phi, \eta}_{\text{PR}}(\psi_{RA})) =  F(\psi_{RA}, \L^{\eta}_A(\psi_{RA})),
\end{equation}
where it is implicit that the identity channel acts on the reference system $R$. Therefore, analyzing the convergence of the sequence $\{\widetilde{\U}^{\phi,\eta}_{\text{PR}}\}_{\eta\in[0,1) }$ to $\U^{\phi}_{\text{PR}}$ is equivalent to analyzing the convergence of a sequence of pure-loss channels to an identity channel. We note that the same result holds for an ideal displacement unitary and its experimental approximation, as shown in \eqref{eq:displacement-fidelity-simplified}. Therefore, from the results in Section \ref{sec:displacement-approx}, it follows directly that the sequence $\{\widetilde{\U}^{\phi,\eta}_{\text{PR}}\}_{\eta\in[0,1) }$ does not converge  uniformly  to $\U^{\phi}_{\text{PR}}$. Rather the convergence holds in the strong sense. Moreover, the dependence of an estimate of the  Shirokov--Winter energy-constrained diamond distance between $\widetilde{\U}^{\phi,\eta}_{\text{PR}}$ and $\U^{\phi}_{\text{PR}}$ is given by \eqref{eq:buresd-displacement}. From the analysis in Section \ref{sec:SW-energy-constrained-dis-op}, we conclude that only for low values of the energy constraint $E$ on input states and high values of  $\eta$, i.e., low values of the loss, high accuracy in simulating $\U^{\phi}_{\text{PR}}$ can be achieved.  

We now consider a phenomenological model to approximate the ideal phase rotation $\U^{\phi}_{\text{PR}}$. In particular, instead of $\U^{\phi}_{\text{PR}}$, the following channel is applied:
\begin{equation}
    \widetilde{\U}^{\phi, \sigma}\equiv \int_{0}^{2\pi} d\phi'~p(\phi', \phi, \sigma, 0, 2\pi) \U^{\phi'}, 
\end{equation}
where $p(\phi', \phi, \sigma, 0, 2\pi)$ is a truncated normal distribution with location parameter $\phi$ and scale parameter $\sigma$, as defined in \eqref{eq:truncated-uni-prob}. 

Let $\ket{\alpha}$ be an input coherent state. Then from unitary invariance of fidelity it follows that 
\begin{align}
F(\U^{\phi}(\ket{\alpha}\bra{\alpha})&, \U^{\phi'}(\ket{\alpha}\bra{\alpha}) ) \nonumber\\
&= F(\ket{\alpha}\bra{\alpha}, \U^{\phi'-\phi}(\ket{\alpha}\bra{\alpha}))    \\
& = \exp(-2 \vert \alpha \vert^2 (\sin(\phi'-\phi))^2),
\end{align}
which converges to zero as $\vert\alpha \vert^2\to \infty$. Therefore, by an application of the dominated convergence theorem, it follows that the sequence $\{\widetilde{\U}^{\phi,\sigma}\}_{\sigma \in [0,\infty)}$ does not converge uniformly to the ideal phase rotation $\U^{\phi}$.

The strong convergence of $\{\widetilde{\U}^{\phi,\sigma}\}_{\sigma \in [0,\infty)}$ to $\U^{\phi}$ follows from arguments similar those  given in \eqref{eq:fid-bs2-approx} and \eqref{eq:bs2-approx-strong-conv}.

We now provide an estimate of the Shirokov--Winter energy-constrained diamond distance between $\widetilde{\U}^{\phi,\sigma}$ and $\U^{\phi}$. Consider the following chain of inequalities:
\begin{align}
 &\frac{1}{2}\Vert \U^{\phi} - \widetilde{\U}^{\phi, \sigma}\Vert_{\diamond E}  \nonumber \\
 &\leq \frac{1}{2}\int_{0}^{2\pi} d\phi'~p(\phi', \phi, \sigma, 0, 2\pi) \Vert \I - \U^{\phi'- \phi}\Vert_{\diamond E}\\
 & \leq \int_{0}^{2\pi} d\phi'~p(\phi', \phi, \sigma, 0, 2\pi) 2\sqrt{E\vert\phi' - \phi \vert}~.\label{eq:bound-pr-approx2}
\end{align}
The first inequality follows from convexity and unitary invariance of the trace distance. The last inequality follows from \cite[Proposition~3.2]{BD18}. Since the upper bound in \eqref{eq:bound-pr-approx2} is exactly same as the upper bound in \eqref{eq:bound-bs-approx2}, we conclude that only for low values of both the energy constraint $E$ and the scale parameter $\sigma$ in \eqref{eq:truncated-uni-prob}, high accuracy in simulating $\U^{\phi}_{\text{PR}}$ using $\widetilde{\U}^{\phi, \sigma}$ can be achieved.

\section{Approximation of a single-mode squeezer}\label{sec:sms}

In this section, we analyze the convergence of the experimental implementation of a measurement-induced single-mode squeezer from \cite{FMA05} to the ideal single-mode squeezer. 
A single-mode squeezer is a unitary operator defined as
\begin{align}
\label{eq:squeezer}
S(\xi) \equiv \exp[(\xi^{*} \hat{a}^2 - \xi \hat{a}^{\dagger 2})/2]~, 
\end{align}
where $\xi = re^{i \theta}$, with $r \in [0, \infty)$ and $\theta \in [0, 2 \pi]$ (see, e.g., \cite{L15} for a review). A squeezing transformation realizes a decrement in the variance of one of the quadratures at the expense of a corresponding increment in the variance of the complementary quadrature, which is helpful for improving the sensitivity of an interferometer \cite{C81} and for other quantum metrological tasks~\cite{L15}. 

Let $\rho_A$ be an input quantum state, and let $\hat{x}_A$ and $\hat{p}_A$ denote the position- and momentum-quadrature operators for mode $A$, respectively. As described in Figure \ref{fig:SMS-sim}, the simulation from \cite{FMA05} of $\S^r(\rho_A) = S(r)\rho_A S(-r)$, such that $e^{-r} = \sqrt{\eta}$, is given by the following transformation of the mode operators:
\begin{align}
\hat{x}_A & \to \sqrt{\eta} \hat{x}_A + \sqrt{1-\eta}~e^{-r_E} \hat{x}^{0}_E~,  \\
\hat{p}_A & \to \frac{1}{\sqrt{\eta}} \hat{p}_A~,  
\end{align}
where $\hat{x}^0_E$ is the position-quadrature operator corresponding to the vacuum state and $r_E$ is the squeezing parameter corresponding to the squeezed vacuum state. We denote the channel corresponding to the experimental implementation of an ideal single-mode squeezer by $\widetilde{\mathcal{S}}^{\eta, r_E} = \widetilde{\mathcal{S}}^{e^{-2r}, r_E}$. Furthermore, by applying the inverse  $\S^{-r}$ of the ideal single-mode squeezer $\S^{r}$ on the output of $\widetilde{\S}^{e^{-2r}, r_E}$, we arrive at the following transformation:
\begin{align}
\hat{x}_{\text{out}} & = \hat{x}_A + \frac{\sqrt{1-\eta}}{\sqrt{\eta}}~e^{-r_E} \hat{x}^{0}_E~, \label{eq:inverted-squeeze-channel-1} \\
\hat{p}_{\text{out}} & =  \hat{p}_A~.  
\label{eq:inverted-squeeze-channel-2}
\end{align}
We denote the channel induced by the transformation in \eqref{eq:inverted-squeeze-channel-1}--\eqref{eq:inverted-squeeze-channel-2} by $\Xi^{\eta, r_E}$. Since all the elements involved in the transformation are Gaussian, the channel $\Xi^{\eta, r_E}$ can be described by its action on the mean and covariance matrix of the input state $\rho_A$. In particular, there are two $2 \times 2$ real matrices, the scaling matrix $X_{\Xi^{\eta, r_E}} $ and the noise matrix $Y_{\Xi^{\eta, r_E}}$, which characterize the Gaussian channel $\Xi^{\eta, r_E}$ completely (background on Gaussian channels can be found in the appendices). It is easy to check that the action of  $\Xi^{\eta, r_E}$ does not change the mean vector of $\rho_A$. Therefore, the scaling matrix $X_{\Xi^{\eta, r_E}} = I_{2}$, where $I_{2}$ is a two-dimensional identity matrix. Moreover,  the expectation value of the anticommutator $\{\hat{x}_A, \hat{x}^0_E\}$ is equal to zero, which further implies that the noise matrix $Y_{\Xi^{\eta, r_E}}$ has the following form:
$
Y_{\Xi^{\eta, r_E}} =\text{diag}\left(
(1-\eta)e^{-2r_E}/\eta , 0\right)
.$

Let us study the channel $\Xi^{\eta, r_E}$ in further detail. As observed in \cite{ASH07}, all single-mode bosonic Gaussian channels can be categorized into six different canonical forms. In particular, the canonical form $\Phi_{B_1}$ has the following $X_{\Phi_{B_1}}$ and $Y_{\Phi_{B_1}}$ matrices \cite{ASH07}:
\begin{equation}
X_{\Phi_{B_1}} = I_2, \qquad Y_{\Phi_{B_1}} = \text{diag}(0, 1).
\end{equation}
We now show that the channel $\Xi^{\eta, r_E}$ is unitarily equivalent to the canonical form $\Phi_{B_1}$ \cite{ASH07}. Let $\rho$ be a quantum state with the covariance matrix $V_{\rho}$. Then the symplectic matrix $\sigma_x = \begin{bmatrix}
0 & 1\\
1 & 0
\end{bmatrix}$ transforms the covariance matrix $V_{\rho}$ as follows: $V^{\prime}_{\rho}$ = $\sigma_x V_{\rho} \sigma_x$. We then apply the symplectic transformation corresponding to the symplectic matrix $\K =\text{diag}(\varsigma, 1/\varsigma)$, where $\varsigma = \sqrt{(1-\eta)}e^{-r_E}/\sqrt{\eta}$,  on the covariance matrix $V^{\prime}_{\rho}$. The transformed covariance matrix is given by $V^{\prime \prime}_{\rho} = \K V^{\prime}_{\rho} \K$. We now apply the canonical form $\Phi_{B_1}$ on the transformed state, and get the following transformation of the covariance matrix $V^{\prime \prime}_{\rho}$: $V^{\prime \prime \prime}_{\rho} = V^{\prime \prime}_{\rho} + Y_{\Phi_{B_1}}$. We then apply the symplectic transformation corresponding to the symplectic matrix $\K^{-1}$ followed by $\sigma_x$ on $V^{\prime \prime \prime}_{\rho}$, and get the following final covariance matrix $V^{\text{final}}_{\rho}$:
\begin{align}
V^{\text{final}}_{\rho} &= \sigma_x \K^{-1}V^{\prime  \prime}_{\rho} \K^{-1} \sigma_x+ \sigma_x \K^{-1} Y_{\Phi_{B_1}}\K^{-1} \sigma_x  \\
	&=V_{\rho} + \text{diag}(\varsigma^2,0)  \\
	&= V_{\rho} + \text{diag}\left(
	(1-\eta)e^{-2r_E}/\eta , 0\right)~,
	\end{align}
which implies that the overall transformation is the same as the action of the channel $\Xi^{\eta, r_E}$ on the state $\rho$. Therefore, we have shown that the Gaussian channel $\Xi^{\eta, r_E}$ is unitarily equivalent by Gaussian input and output unitaries to the canonical form $\Phi_{B_1}$. This gives a physical interpretation to channels in the class $\Phi_{B_1}$, in terms of the measurement-induced squeezing approximation from \cite{FMA05}.

\begin{figure}[ptb]
	\begin{center}
		\includegraphics[
		width=3.3in
		]{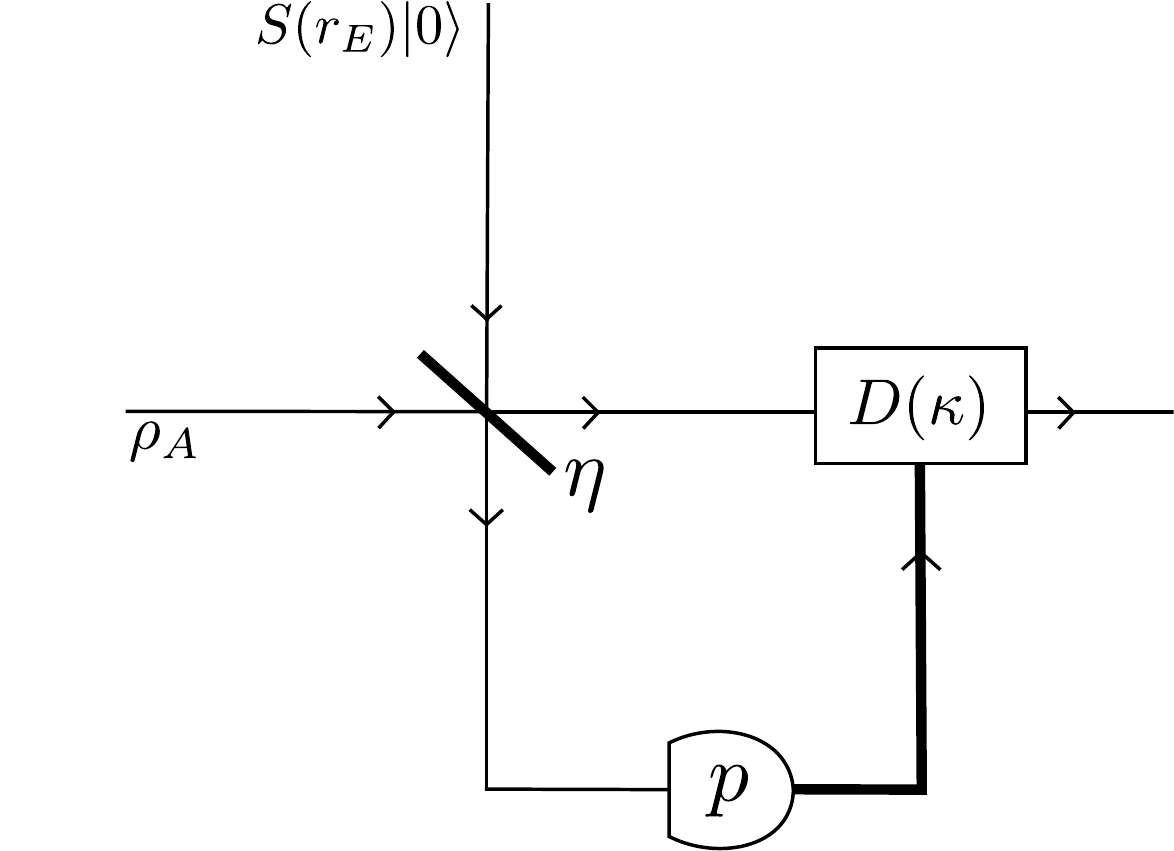}
	\end{center}
	\caption{The figure plots an experimental approximation of the ideal single-mode squeezing unitary $S(r)$ on the input state $\rho_A$, such that $e^{-r} = \sqrt{\eta}$.  $S(r_E)\ket{0}_E$ represents a squeezed vacuum state in the mode $E$. The experimental approximation of an ideal single-mode squeezing operation corresponds to the following transformations: sending $\rho_A$ and $S(r_E)\ket{0}_E$ through a beamsplitter with transmissivity $\eta$ followed by a measurement of the momentum quadrature in mode $E$. Then a feed forward operation corresponding to the measurement outcome $p$ followed by a displacement operator $D(\kappa)$ on mode $A$ \cite{FMA05}, where $\kappa = -i \sqrt{(1-\eta)/(2\eta)}p$. }
	\label{fig:SMS-sim}\end{figure}

\subsection{Lack of uniform convergence}

We now prove that the sequence $\{ \widetilde{\S}^{e^{-2r}, r_E}  \}_{r_E \in [0, \infty)}$ does not converge uniformly to the ideal single-mode squeezer $\S^r$. Let $\ket{z}$ be a squeezed-vacuum input state with the covariance matrix $V_{\ket{z}\bra{z}} = \text{diag}(z, 1/z)$ and mean vector $\mu_{\ket{z}\bra{z}} = (0,0)^{\text{T}}$, where $z \in [0,\infty)$. Then under the action of $\Xi^{\eta, r_E}$, the covariance matrix $V_{\ket{z}\bra{z}}$ transforms as follows:
\begin{equation}
V^{\text{out}}_{\ket{z}\bra{z}} = \text{diag}\big(
z+ (1-\eta)e^{-2r_E}/\eta, 0\big)
.
\end{equation}
We can use these expressions in the Uhlmann fidelity formula for single-mode Gaussian states \cite{HS98, PHS98}. By using the unitary invariance of fidelity, we find that 
\begin{multline}
F(\S^r(\ket{z}\bra{z}), \widetilde{\S}^{e^{-2r}, r_E}(\ket{z}\bra{z})) \\
= F(\ket{z}\bra{z}, \Xi^{\eta, r_E}(\ket{z}\bra{z})). 
\end{multline}
Moreover, we find that \cite{Mathematica}
\begin{align}
F(\ket{z}\bra{z}, \Xi^{\eta, r_E}(\ket{z}\bra{z})) = \sqrt{\frac{2z}{2z+(e^{2r}-1)e^{-2r_E}}}~.
\end{align}
Therefore, for fixed $r_E$
\begin{equation} \label{eq:lack-of-uniform-conv-for-SMS}
\lim_{z \to 0}F(\S^r(\ket{z}\bra{z}), \widetilde{\S}^{e^{-2r}, r_E}(\ket{z}\bra{z})) = 0~,
\end{equation}
which implies that the sequence $\{ \widetilde{\S}^{e^{-2r}, r_E}  \}_{r_E \in [0, \infty)}$ does not converge uniformly to the ideal single-mode squeezer transformation $\S^r$.

The reasoning behind \eqref{eq:lack-of-uniform-conv-for-SMS} can be intuitively explained as follows: the channel $\Xi^{\eta, r_E}$ adds noise to the $\hat{x}$ quadrature only. Therefore, it can be discriminated from an identity channel by using an input state that has vanishing noise in the $\hat{x}$ quadrature operator. Since an infinitely squeezed vacuum state (infinitely squeezed in the position quadrature) satisfies such a condition, then \eqref{eq:lack-of-uniform-conv-for-SMS} follows. 

\subsection{Strong convergence}

We now argue that the sequence $\{\widetilde{\S}^{e^{-2r}, r_E}\}_{r_E \in [0,\infty)}$ converges to $\S^r$ in the strong sense. Let $\chi_{\rho_A}(x, p)$ denote the Wigner characteristic function of the input state~$\rho_A$. Let $\tilde{\rho}^{\text{out}}_A$ denote the state after the action of $\widetilde{\S}^{e^{-2r}, r_E}$ on $\rho_A$: $\tilde{\rho}_A^{\text{out}} = \widetilde{\S}^{e^{-2r}, r_E}(\rho_A)$. Then the characteristic function of $\tilde{\rho}^{\text{out}}_A$ is given by  
\begin{align}
\chi_{\tilde{\rho}^{\text{out}}_A}(x, p) = \chi_{\rho_A}(e^rx, e^{-r}p) e^{-\frac{1}{4}(e^{2r}-1)e^{-2r_E} p^2}~. 
\end{align}
Moreover, the characteristic function of $\S^r(\rho_A)$ is given by 
\begin{align}
\chi_{\S^r(\rho_A)}(x, p) = \chi_{\rho}(e^r x, e^{-r}p)~. 
\end{align}
Therefore, for each $\rho_A \in \D(\H_A)$, and  for all $x, p \in \mathbb{R}$
\begin{align}\label{eq:strong-conv-SMS}
\lim_{r_E \to \infty}  \chi_{\tilde{\rho}^{\text{out}}_A}(x, p) = \chi_{\S^r(\rho_A)}(x, p)~.
\end{align}
Therefore, we have shown that the sequence of characteristic functions $\chi_{\tilde{\rho}^{\text{out}}_A}(x, p) $ converges pointwise to $\chi_{\S^r(\rho_A)}(x, p)$, which implies that the sequence $\{\widetilde{\S}^{e^{-2r}, r_E}\}_{r_E \in [0,\infty)}$ converges strongly to $\S^r$ \cite[Lemma~8]{LSW18}. 

As described in Figure \ref{fig:SMS-sim}, the simulation of an ideal single-mode unitary consists of an ideal displacement. We now briefly discuss the case when the displacement operator involved in the simulation of $\S^r$ is not ideal. By using the counterexample from before, we find that convergence of the simulation of a single-mode squeezing operation to an ideal single-mode squeezing operation is not uniform. From \eqref{eq:strong-conv-displacement}, \eqref{eq:strong-conv-SMS}, and \cite[Proposition~2]{M18}, it follows that convergence holds in the strong sense. 

Furthermore, the strong convergence of the sequence $\{\widetilde{\S}^{e^{-2r}, r_E}\}_{r_E\in [0, \infty)}$ to $\S^r$ implies that the experimental approximations of an ideal single-mode squeezer, as described in Figure \ref{fig:SMS-sim}, simulate the desired unitary operation uniformly on the set density operators whose marginals on the channel input have bounded energy~\cite{S18}. However, as discussed previously, from an experimental perspective, it is important to know how this convergence depends on the experimental parameters. We now consider experimentally relevant input Gaussian states with energy constraints, such as single-mode squeezed states, coherent states, and two-mode squeezed vacuum states. For any fixed finite value of the energy constraint, we find that,  among these Gaussian states, inputting a two-mode squeezed vacuum state provides the largest value of the sine distance between the ideal single-mode squeezer and its experimental approximation. 

\subsection{Estimates of the Shirokov--Winter energy-constrained diamond norm}

Let us study  in detail the case when the input state is the two-mode squeezed vacuum state with parameter~$N$, as defined in \eqref{eq:tms}. The fidelity between $\S^r(\psi_{\operatorname{TMS}}(N))$ and $\widetilde{\S}^{e^{-2r}, r_E}(\psi_{\operatorname{TMS}}(N))$ is given by \cite{Mathematica}
\begin{multline}\label{eq:fidelity-squeezing-tms}
F(\S^r(\psi_{\operatorname{TMS}}(N)), \widetilde{\S}^{e^{-2r}, r_E}(\psi_{\operatorname{TMS}}(N)))  \\
= \frac{1}{\sqrt{1+(N+1/2)(e^{2r}-1 )e^{-2r_E} }}~.
\end{multline}

Next, we perform numerical evaluations to see how close the experimental approximation $\widetilde{\S}^{e^{-2r}, r_E}$ is to the ideal squeezing operation $\S^r$ for a fixed input quantum state $\psi_{\text{TMS}}(N)$. Fix the squeezing parameter $r= 0.46$, which corresponds to the squeezing strength 4~dB.  We use the relation $10 \log_{10}(\exp(2r)) \approx 8.686 r$ to convert the squeezing parameter $r$ to units of dB. Let $g(r_E, N)$ denote the sine distance between $\S^r(\psi_{\operatorname{TMS}}(N))$ and~$\widetilde{\S}^{e^{-2r}, r_E}(\psi_{\operatorname{TMS}}(N))$:
\begin{align}\label{eq:sd-squeezing}
g(r_E, N) = \sqrt{1- \frac{1}{\sqrt{1+(N+1/2)(e^{0.92}-1 )e^{-2r_E} }}}~,
\end{align}
where we used \eqref{eq:fidelity-squeezing-tms}.

In Figure~\ref{fig:energy-constraind-squeezing}, we plot $g(r_E, N)$ in \eqref{eq:sd-squeezing} versus the offline squeezing strength $r_E$ for certain values of the input mean photon number $N$. In particular, we find that the simulation of $\S^{r}$ is more accurate for low values of the energy constraint on the input states. The figure indicates that an offline squeezing strength of 15 dB, which is what is currently experimentally achievable \cite{VMDS16}, is not sufficient to simulate an ideal squeezing operation with squeezing strength 4 dB, with a high accuracy, by using the measurement-induced protocol from \cite{FMA05}.

\begin{figure}[ptb]
	\includegraphics[
	width=3.3in
	]{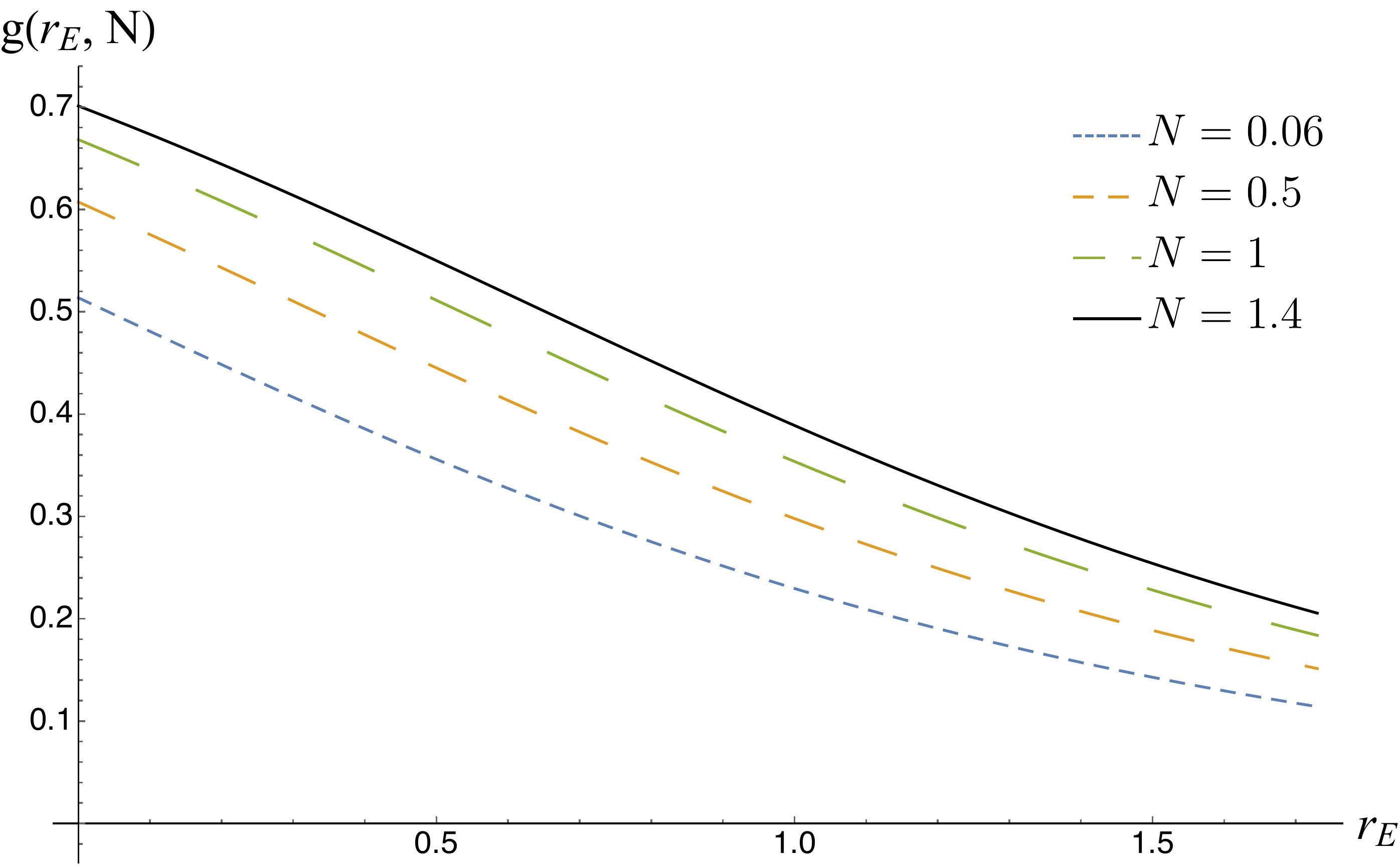}
	\caption{The figure plots the sine distance $g(r_E, N)$ in \eqref{eq:sd-squeezing} between an ideal single-mode squeezer $\S^{r}$ with 4~dB squeezing strength and its experimental approximation $\widetilde{S}^{e^{-2r}, r_E}$ when the input state is the two-mode squeezed vacuum state with parameter $N$, as defined in \eqref{eq:tms}. In the figure, we select certain values of the mean-photon number $N$ of the channel input, with the choices indicated next to the figure. For a fixed value of $r_E$, the simulation of $\S^{r}$ is more accurate for low values of the energy constraint on input states. The figure indicates that an offline squeezing strength  of 15~dB, which is what is currently experimentally achievable \cite{VMDS16}, is not sufficient to simulate an ideal squeezing operation  with squeezing strength 4~dB, with a high accuracy, by using the protocol from~\cite{FMA05}.}
	\label{fig:energy-constraind-squeezing}
\end{figure}

We further investigate the strength of the offline  squeezing  required to simulate the ideal squeezing operator with high accuracy. In Figure \ref{fig:energy-constraind-squeezing2}, we plot Figure \ref{fig:energy-constraind-squeezing} for high values of the squeezing parameter $r_E$. The figure indicates that for the low input mean photon number $N \approx 0.06$, approximately 26 dB offline squeezing strength is required to achieve a reasonable accuracy~($\approx 97\%$).

\begin{figure}[ptb]
	\includegraphics[
	width=3.3in
	]{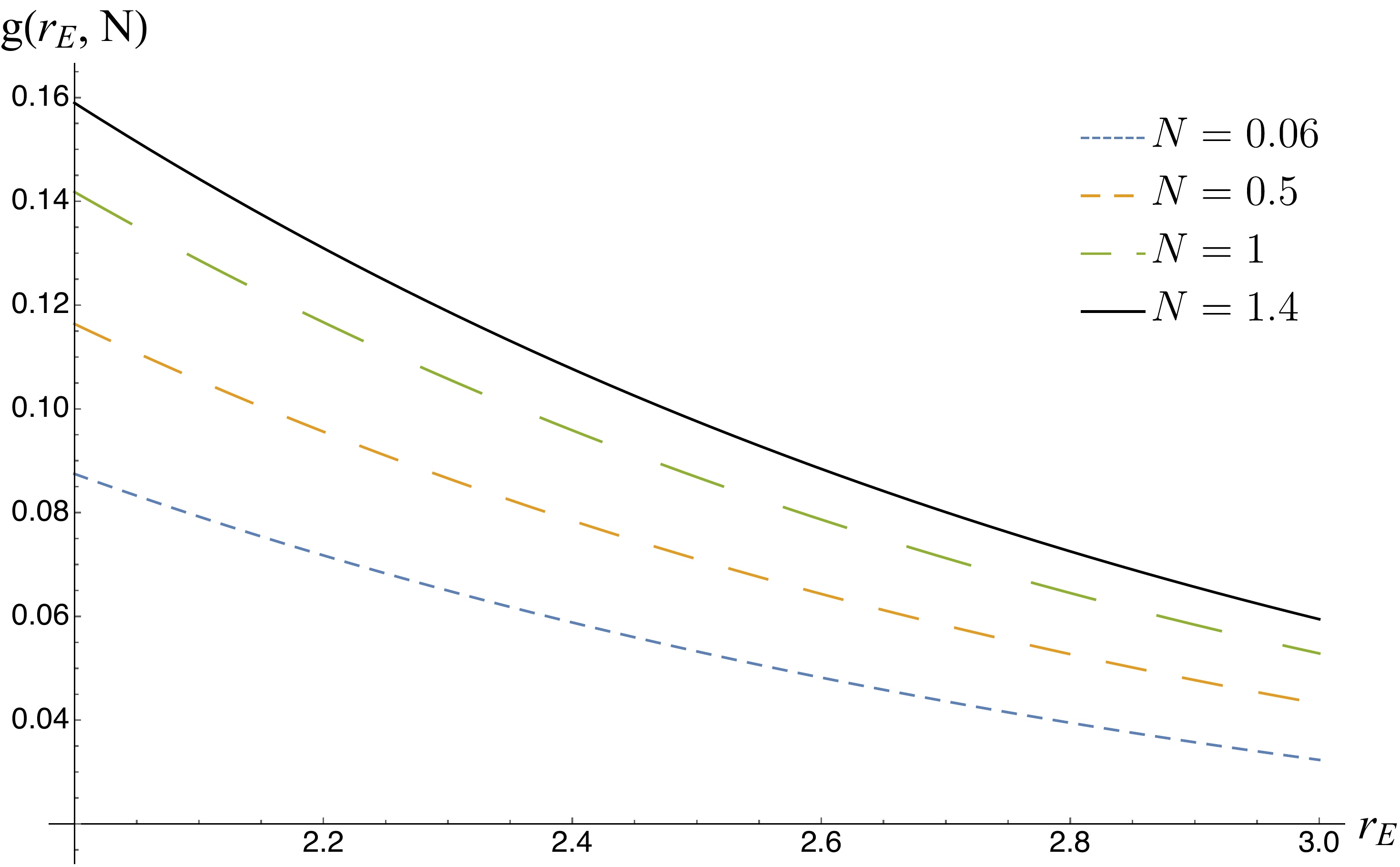}
	\caption{The figure plots Figure \ref{fig:energy-constraind-squeezing} for high values of the offline squeezing parameter $r_E$. The figure indicates that only for high values of the offline squeezing parameter $r_E$ and low values of $N$, high accuracy in simulating $\S^r$ with $4$ dB squeezing strength can be achieved. }
	\label{fig:energy-constraind-squeezing2}
\end{figure}

\section{Approximation of a SUM gate}

In this section, we analyze the convergence of experimental approximations of a measurement-induced $\text{SUM}$ gate from \cite{FMA05} to the ideal $\text{SUM}$ gate. A $\text{SUM}$ gate is a quantum nondemolition (QND) interaction between two modes, the CV analog of the CNOT gate~\cite{BSBN02}:
\begin{equation}
\text{SUM}^G \equiv \exp(- i~ G \hat{x}_1 \otimes \hat{p}_2)~,\label{eq:SUM}
\end{equation} 
where $\hat{x}_1$ and $\hat{p}_2$ correspond to the position- and momentum-quadrature operators of modes 1 and 2, respectively, and $G$ is the gain of the interaction. Generally, $G=1$ is sufficient for quantum information processing tasks. Other than universal quantum computation, this CV entangling quantum gate has applications in CV quantum error correction \cite{SLB1998,LS98} and CV coherent communication \cite{WKB07,PhysRevA.77.022321}.

Let $\rho^{\text{in}}_{12}$ denote a two-mode input quantum state. Then the action of the ideal $\text{SUM}^G$ gate on the mode operators $\hat{x}_1$, $\hat{x}_2$, $\hat{p}_1$, and $\hat{p}_2$ of $\rho^{\text{in}}_{12}$ is given by 
\begin{align}
\hat{x}^{\text{in}}_1 & \to \hat{x}^{\text{in}}_1~,  \\
\hat{p}^{\text{in}}_1 & \to \hat{p}^{\text{in}}_1 - G \hat{p}^{\text{in}}_2~,  \\
\hat{x}^{\text{in}}_2 & \to \hat{x}^{\text{in}}_2 + G \hat{x}^{\text{in}}_1~, \\
\hat{p}^{\text{in}}_2 & \to \hat{p}^{\text{in}}_2~.
\end{align}

On the other hand, as described in Figure \ref{fig:SUM-sim}, the simulation of $\text{SUM}^G(\rho^{\text{in}}_{12})$ from \cite{FMA05} is given by the following transformation of the mode operators $\hat{x}_1$, $\hat{x}_2$, $\hat{p}_1$, and $\hat{p}_2$ of $\rho^{\text{in}}_{12}$:
\begin{align}
\hat{x}^{\text{in}}_1  & \to \hat{x}^{\text{in}}_1 - \sqrt{\frac{1-R}{1+R}} e^{-r_A} \hat{x}^0_A~,  \\
\hat{p}^{\text{in}}_1 & \to \hat{p}^{\text{in}}_1 - G \hat{p}^{\text{in}}_2 + \sqrt{\frac{R(1-R)}{1+R}} e^{-r_B} \hat{p}^0_B~,  \\
\hat{x}^{\text{in}}_2 & \to \hat{x}^{\text{in}}_2 + G \hat{x}^{\text{in}}_1 + \sqrt{\frac{R(1-R)}{1+R}}e^{-r_A}\hat{x}^0_A~, \\
\hat{p}^{\text{in}}_2 & \to \hat{p}^{\text{in}}_2 + \sqrt{\frac{1-R}{1+R}} e^{-r_B} \hat{p}^0_B~,
\end{align}
where $G = 1/\sqrt{R} - \sqrt{R}$, $r_A$ and $r_B$ denote the squeezing parameter corresponding to the modes $A$ and $B$, respectively, and $0 <R \leq 1$. We denote the channel corresponding to the experimental implementation of an ideal $\text{SUM}^G$ by $\widetilde{\text{SUM}}^{r_A, r_B, R}$.  Furthermore, by applying the inverse of $\text{SUM}^G$ on the output of $\widetilde{\text{SUM}}^{r_A, r_B, R}$, we get the following transformation of the mode operators:
\begin{align}
\hat{x}^{\text{out}}_1 & =  \hat{x}^{\text{in}}_1 - \sqrt{\frac{1-R}{1+R}} e^{-r_A} \hat{x}^0_A~, \\
\hat{p}^{\text{out}}_1 & =   \hat{p}^{\text{in}}_1 + \sqrt{\frac{1-R}{R(1+R)}} e^{-r_B} \hat{p}^0_B~, \\
\hat{x}^{\text{out}}_2 & = \hat{x}^{\text{in}}_2 + \sqrt{\frac{1-R}{R(1+R)}}e^{-r_A}\hat{x}^0_A~, \\
\hat{p}^{\text{out}}_2 & =  \hat{p}^{\text{in}}_2 + \sqrt{\frac{1-R}{1+R}} e^{-r_B} \hat{p}^0_B~.
\end{align}
We denote the channel induced by this overall transformation by $\Lambda^{r_A, r_B, R}$. Since all the elements involved in the transformation are Gaussian, the channel $\Lambda^{r_A, r_B, R}$ can be described by its action on the mean vector and covariance matrix of the input state $\rho^{\text{in}}_{12}$. We now find two $4\times4$ real matrices $X_{\Lambda^{r_A, r_B, R}}$ and $Y_{\Lambda^{r_A, r_B, R}}$, which characterize the Gaussian channel $\Lambda^{r_A, r_B, R}$ completely (background on Gaussian channels can be found in the appendices). 
From the aforementioned equations, it is clear that the mean vector of $\rho^{\text{in}}_{12}$ is invariant under the action of the channel $\Lambda^{r_A, r_B, R}$. Therefore, the scaling matrix $X_{\Lambda^{r_A, r_B, R}} = I_4$, where $I_4$ is a four-dimensional identity matrix. Moreover, the noise matrix $Y_{\Lambda^{r_A, r_B, R}}$ has the following form: 
\begin{equation}
 Y_{\Lambda^{r_A, r_B, R}} = 
\begin{bmatrix}
\alpha(r_A)& 0 & -\frac{\alpha(r_A)}{\sqrt{R}} & 0\\
0 & \frac{\beta(r_B)}{R} & 0 & \frac{\beta(r_B)}{\sqrt{R}}\\
 -\frac{\alpha(r_A)}{\sqrt{R}}  & 0 & \frac{\alpha(r_A)}{R} & 0\\
 0 & \frac{\beta(r_B)}{\sqrt{R}} & 0 &  \beta(r_B)~
\end{bmatrix}~,
\end{equation}
where
\begin{align}
\alpha(r_A) & = [(1-R)e^{-2r_A}]/(1+R), \\
\beta(r_B) & = [(1-R)e^{-2r_B}]/(1+R).
\end{align}

\begin{figure}[ptb]
	\begin{center}
		\includegraphics[
		width=3.3in
		]{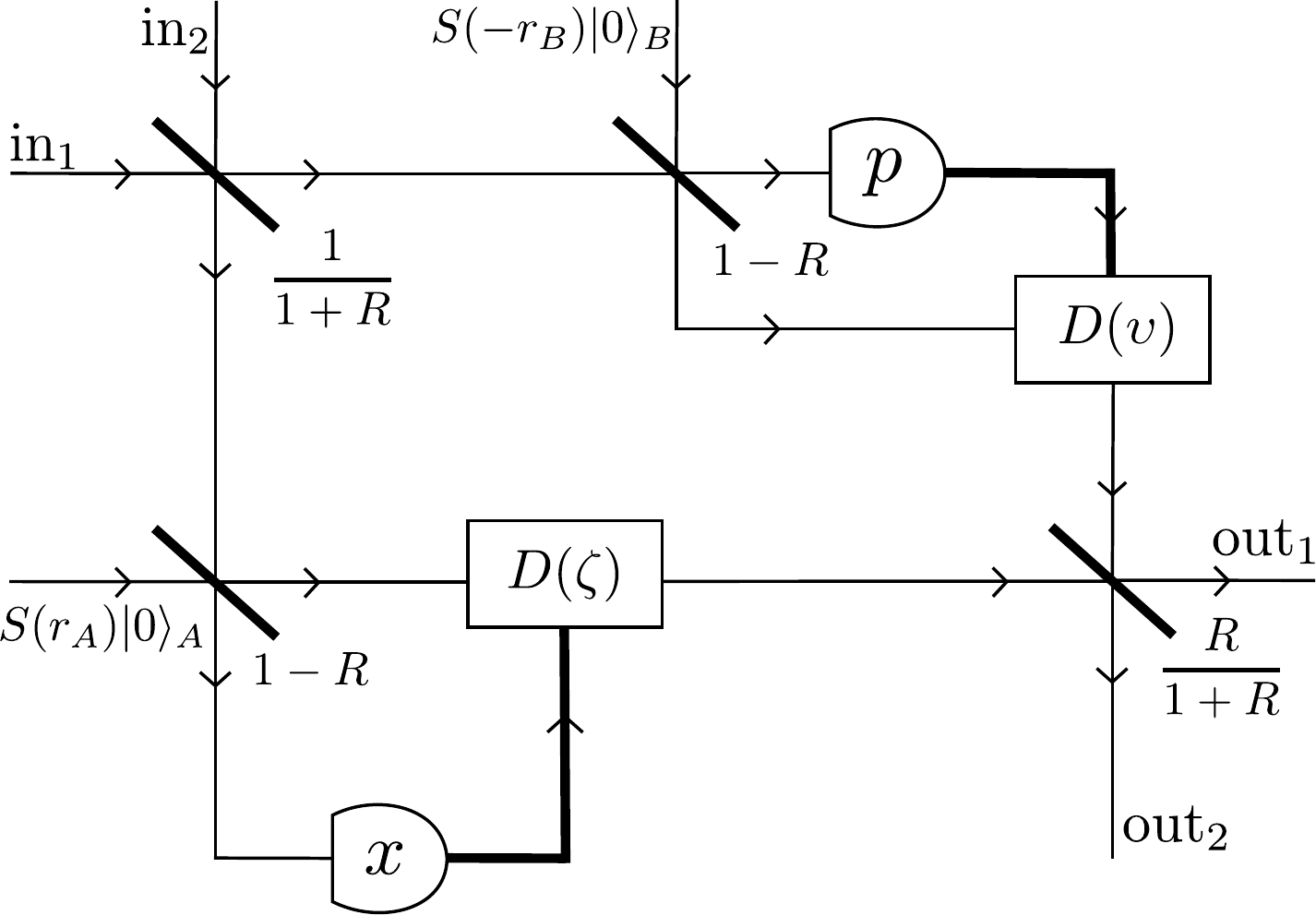}
	\end{center}
	\caption{The figure plots an experimental approximation of the ideal SUM gate ($\text{SUM}^G = \exp(- i~ G \hat{x}_1 \otimes \hat{p}_2)$, where $G = 1/\sqrt{R} - \sqrt{R}$, and $0<R \leq 1$) on a two-mode input quantum state. The circuit consists of a sequence of passive transformations, off-line squeezed vacuum states, homodyne measurements, feed-forward operations, and displacement unitaries \cite{FMA05}. $D(\zeta)$ and $D(\upsilon)$ denote displacement unitaries with $\zeta = -\sqrt{1-R}x/\sqrt{2R}$ and $\upsilon= - \sqrt{1-R}p/\sqrt{2R}$, respectively.}
	\label{fig:SUM-sim}\end{figure}

\subsection{Lack of uniform convergence}

We now prove that the sequence  $\{\widetilde{\text{SUM}}^{r_A, r_B, R}\}_{r_A, r_B \in [0,\infty)}$ does not converge  uniformly to the ideal $\text{SUM}^G$ gate. Let $\ket{\psi}_{12} = \ket{z}_1\ket{z}_2$, where $\ket{z}$ denotes a single-mode squeezed-vacuum state with the covariance matrix $V_{\ket{z}\bra{z}} = \text{diag}(z, 1/z)$, where $z \in [0,\infty)$. The covariance matrix of $\psi_{12}$ is $V_{\psi_{12}} = \text{diag}(z, 1/z, z, 1/z)$, and its mean vector is $\mu_{\psi_{12}} = (0,0,0,0)^{\text{T}}$. Under the action of $\Lambda^{r_A, r_B, R}$, the covariance matrix $V_{\psi_{12}}$ transforms as follows:
\[V^{\text{out}}_{\psi_{12}} = 
\begin{bmatrix}
z+ \alpha(r_A) & 0 &  -\frac{\alpha(r_A)}{\sqrt{R}} & 0 \\
0 & \frac{1}{z}+\frac{ \beta(r_B)}{R} & 0 &  \frac{\beta(r_B)}{\sqrt{R}} \\
 -\frac{\alpha(r_A)}{\sqrt{R}} & 0 & z+\frac{\alpha(r_A)}{R} & 0 \\
0 &  \frac{\beta(r_B)}{\sqrt{R}} & 0 & \frac{1}{z}+\beta(r_B)
\end{bmatrix}.
\]

We now use these expressions in the Uhlmann fidelity formula for two-mode Gaussian states~\cite{MM12}. By using the unitary invariance of fidelity, we find that
\begin{multline}
F(\text{SUM}^G(\psi_{12}), \widetilde{\text{SUM}}^{r_A, r_B, R}(\psi_{12})) \\
= F(\psi_{12}, \Lambda^{r_A, r_B, R}(\psi_{12})).
\end{multline}
Moreover, we find that \cite{Mathematica}  
\begin{multline}
F(\psi_{12}, \Lambda^{r_A, r_B, R}(\psi_{12}))  \\
= \frac{2\sqrt{z}R}{\sqrt{(2z R + (1-R)e^{-2r_A})(2R+z(1-R)e^{-2r_B})}}.
\end{multline}
Therefore, for fixed $r_A, r_B$ 
\begin{align}\label{eq:lack-of-uniform-con-SUM-gate}
\lim_{z\to 0} F(\text{SUM}^G(\psi_{12}), \widetilde{\text{SUM}}^{r_A, r_B, R}(\psi_{12})) = 0~,
\end{align}
which implies that the sequence $\{\widetilde{\text{SUM}}^{r_A, r_B, R}\}_{r_A, r_B \in [0,\infty)}$ does not converge uniformly to the ideal $\text{SUM}^G$ gate.

\subsection{Strong convergence}

We now argue that the sequence  $\{\widetilde{\text{SUM}}^{r_A, r_B, R}\}_{r_A, r_B \in [0,\infty)}$ converges to the $\text{SUM}^G$ gate in the strong sense. Let $\rho^{\text{in}}_{12}$ denote the input state. Let $\chi_{\text{SUM}^G(\rho^{\text{in}}_{12})}(x_1, p_1, x_2, p_2)$ denote the characteristic function of the state $\text{SUM}^G(\rho^{\text{in}}_{12})$.
Let $\tilde{\rho}^{\text{out}}_{12}$ denote the state after the action of $\widetilde{\text{SUM}}^{r_A, r_B, R}$ on $\rho_{12}$: $\tilde{\rho}^{\text{out}}_{12} = \widetilde{\text{SUM}}^{r_A, r_B, R}(\rho_{12})$. 
 Then the characteristic function of $\tilde{\rho}^{\text{out}}_{12}$ is given by 
\begin{multline}
\chi_{\tilde{\rho}^{\text{out}}_{12}}(x_1, p_1, x_2, p_2) = \chi_{\text{SUM}^G(\rho^{\text{in}}_{12})}(x_1, p_1, x_2, p_2)\times  \\
 \exp\!\Bigg(\frac{R-1}{4(1+R)}[(p_1-\sqrt{R}p_2)^2 e^{-2r_A}\\
 +(\sqrt{R}x_1+x_2)^2e^{-2r_B}]\Bigg). 
 \notag
\end{multline}
Therefore, for each $ \rho^{\text{in}}_{12}\in \D(\H_1 \otimes \H_2)$, and for all $x_1, p_1, x_2, p_2 \in \mathbb{R}$
\begin{align}\label{eq:strong-conv-SUM}
\lim_{r_A, r_B \to \infty}\chi_{\tilde{\rho}^{\text{out}}_{12}}(x_1, p_1, x_2, p_2)=  \chi_{\text{SUM}^G(\rho^{\text{in}}_{12})}(x_1, p_1, x_2, p_2),
\end{align}
which implies that $\{\widetilde{\text{SUM}}^{r_A, r_B, R}\}_{r_A, r_B \in [0,\infty)}$ converges strongly to the $\text{SUM}^G$ gate. 

\subsection{Unideal displacements}

As described in Figure \ref{fig:SUM-sim}, the simulation of an ideal SUM gate consists of two ideal displacements. We now briefly discuss the case when these displacement operators are not ideal. From the counterexamples given previously, we find that convergence of the simulation of a SUM gate to an ideal SUM gate is not uniform. By using the triangle inequality for sine distance, \eqref{eq:strong-conv-displacement}, \eqref{eq:strong-conv-SUM}, and \cite[Proposition~2]{M18}, the convergence holds in the strong sense. Moreover, the strong convergence of the sequence $\{\widetilde{\text{SUM}}^{r_A, r_B, R}\}_{r_A, r_B \in [0,\infty)}$  to the $\text{SUM}^G$ gate implies that the experimental approximations of an ideal  $\text{SUM}$ gate simulate the desired unitary operation uniformly on the set of density operators whose marginals on the channel input have bounded energy~\cite{S18}. 

\subsection{Estimates of the Shirokov--Winter energy-constrained diamond norm}

Similar to Section \ref{sec:sms}, we investigate the dependence of the convergence of the sequence $\{\widetilde{\text{SUM}}^{r_A, r_B, R}\}_{r_A, r_B \in [0,\infty)}$  to the $\text{SUM}^G$ gate on the experimental parameters when there is a finite energy constraint on the input states. Since the $\text{SUM}^G$ gate acts on two modes, we consider several experimentally relevant quantum states with energy constraints, such as a tensor product of two coherent states, a tensor product of two single-mode squeezed states, a two-mode squeezed vacuum state, and a tensor product of two two-mode squeezed vacuum states. For a fixed finite value of the energy constraint, we find that a tensor product of two two-mode squeezed vacuum states provides the largest value of the sine distance between the ideal $\text{SUM}$ gate and its experimental approximation. 

We now discuss in detail the case when the input state is a tensor product of two two-mode squeezed vacuum states with parameter $N$, as defined in \eqref{eq:tms}. For $G = 1/\sqrt{R} - \sqrt{R}$, and $0<R \leq 1$, the fidelity between $\text{SUM}^G(\psi^{\otimes 2}_{\operatorname{TMS}}(N))$ and $\widetilde{\text{SUM}}^{r_A, r_B, R}(\psi^{\otimes 2}_{\operatorname{TMS}}(N) )$ is given by \cite{Mathematica}
\begin{align}\label{eq:fid-SUM}
F&\Big(\widetilde{\text{SUM}}^{r_A, r_B, R}(\psi^{\otimes 2}_{\operatorname{TMS}}(N)), \text{SUM}^G(\psi^{\otimes 2}_{\operatorname{TMS}}(N))\big) \nonumber \\
& = \frac{2 R e^{r_A + r_B}}{\sqrt{ (\kappa(N, R) - 2 e^{2r_A} R   )( \kappa(N, R) - 2e^{2r_B} R )   }}~,
\end{align}
where  $\kappa(N, R) = (-1+R)(1+2 N)$.

\begin{figure}[ptb]
	\includegraphics[
	width=3.4in
	]{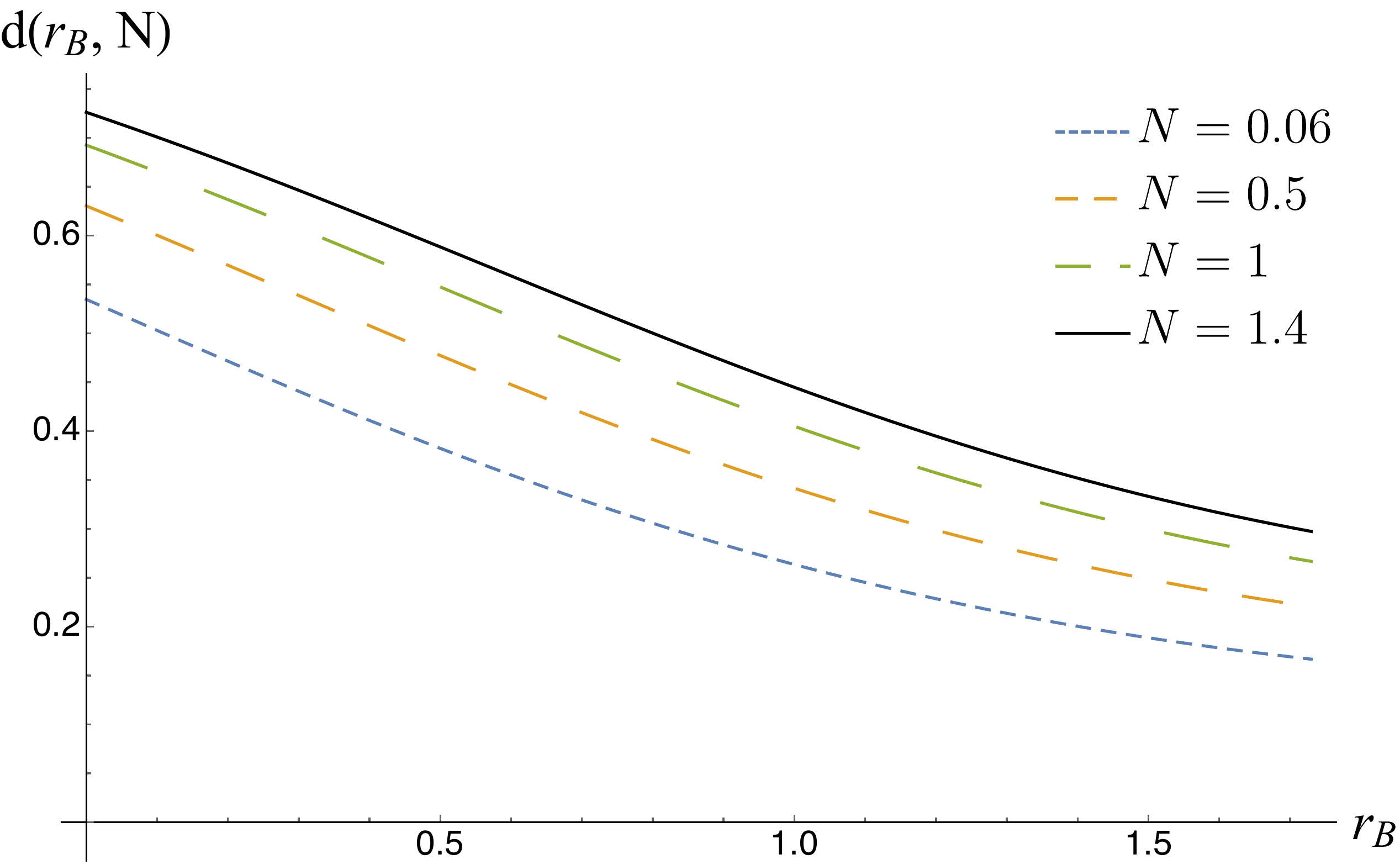}
	\caption{The figure plots the sine distance $d(r_B, N)$ in \eqref{eq:sind-SUM} between an ideal $\text{SUM}^G$ gate with the interaction gain $G=1$ and its experimental approximation $\widetilde{\text{SUM}}^{r_A, r_B, R}$ with $r_A = 1.726$ and $R=(\sqrt{5}-1)^2/4$, when the input state is a tensor product of two two-mode squeezed vacuum states with parameter $N$, as defined in \eqref{eq:tms}. In the figure, we select certain values of the mean-photon number $N$ of the channel input, with the choices indicated next to the figure. For a fixed value of $r_B$, the simulation of $\text{SUM}^G$  is more accurate for low values of the energy constraint on input states. The figure indicates that an offline squeezing strength  of 15~dB, which is what is currently experimentally achievable \cite{VMDS16}, is not sufficient to simulate an ideal $\text{SUM}^G$ gate for  $G=1$, with a high accuracy, by using the protocol from \cite{FMA05}.}
	\label{fig:energy-constraind-SUM}
\end{figure}

 We now perform numerical evaluations to see how close the experimental approximation $\widetilde{\text{SUM}}^{r_A, r_B, R}$ is to the ideal $\text{SUM}^G$ gate for a fixed input state $\psi^{\otimes 2}_{\operatorname{TMS}}(N)$. From \eqref{eq:sind-SUM}, it is evident that the sine distance between $\text{SUM}^G(\psi^{\otimes 2}_{\operatorname{TMS}}(N))$ and $\widetilde{\text{SUM}}^{r_A, r_B, R}(\psi^{\otimes 2}_{\operatorname{TMS}}(N))$ is symmetric in $r_A$ and $r_B$. Therefore, we fix $r_A = 1.726$, which corresponds to the currently experimentally  achievable maximum squeezing ($\approx 15 \text{dB}$) \cite{VMDS16}. We also fix the gain parameter  $G=1$, which implies that $R= (\sqrt{5}-1)^2/4$.

 Let $d(r_B, N)$ denote the sine distance between $\text{SUM}^G(\psi^{\otimes 2}_{\operatorname{TMS}}(N))$ and $\widetilde{\text{SUM}}^{r_A, r_B, R}(\psi^{\otimes 2}_{\operatorname{TMS}}(N))$ for $R=(\sqrt{5}-1)^2/4$ and $r_A = 1.726$:
\begin{align}\label{eq:sind-SUM}
&d(r_B, N) \nonumber \\
& =\sqrt{1-  \frac{2 R e^{r_A + r_B}}{\sqrt{ (\kappa(N, R) - 2 e^{2r_A} R   )( \kappa(N, R) - 2e^{2r_B} R )   }}} ,
\end{align}
where we used \eqref{eq:fid-SUM}.

In Figure \ref{fig:energy-constraind-SUM}, we plot $d(r_B, N)$ in \eqref{eq:sind-SUM} versus the offline squeezing strength $r_B$ for certain values of the input mean photon number $N$. Similar to the results in Sections \ref{sec:displacement-approx} and \ref{sec:sms}, we find that the simulation of $\text{SUM}^G$ is more accurate for low values of the energy constraint on  input states. Moreover, even with a low mean photon number $N=0.06$ of the input states and with the currently experimentally achievable offline squeezing strength of 15~dB, only approximately $83\%$ accuracy in simulating the ideal $\text{SUM}^{G}$ gate for $G=1$ can be achieved. 

 It is an open question to establish analytical bounds to quantify the performance of these experimental approximations of a SUM gate with respect to an energy-constrained distance measure.

\section{Approximation of one- and two-mode Gaussian unitaries}

In this section, we show that a sequence of one-mode Gaussian channels does not converge uniformly to a one-mode Gaussian unitary. The same is true for the two-mode  case. However, convergence occurs in the strong sense.

We begin by defining a set of matrices that characterizes all $n$-mode Gaussian channels. Let $\G^{X, Y}$ denote an $n$-mode Gaussian channel, which is completely characterized by two $2n \times 2n$ real matrices $X$ and $Y$. For $\G^{X, Y}$ to be a physical channel, the $X$ and $Y$ matrices must be such that 
\begin{align}\label{eq:channel-cond}
Y + i \Omega \geq i X \Omega X^T ,  \qquad Y = Y^T. 
\end{align}
Let $S_n$ denote a set of a pair of matrices $X$ and $Y$ that satisfy \eqref{eq:channel-cond}, i.e., 
\begin{align}\label{eq:set-Gauss-ch-matrices}
S_n = \{(X,Y): Y + i \Omega \geq i X \Omega X^T , Y = Y^T \}, 
\end{align}
where $n$ in the subscript of $S_n$ indicates that the set $S_n$ consists of a pair of $2n \times 2n$ real matrices. 

Let $\U_{A\to B}$ denote a single-mode Gaussian unitary transformation. Suppose that an experimental approximation of $\U_{A \to B}$ is a single-mode Gaussian channel $\widetilde{\U}_{A \to B}^{X, Y}$, which is completely characterized by two $2\times 2$ real matrices $X$ and $Y$. We now show that the sequence $\{\widetilde{\U}^{X, Y}\}_{(X, Y) \in S_1}$ does not converge uniformly to $\U$, where $S_1$ is given by \eqref{eq:set-Gauss-ch-matrices} for $n=1$. 
Let $\psi_{RA}(\bar{n})$ denote a two-mode squeezed vacuum state with parameter $\bar{n}$, as defined in \eqref{eq:tms}. 
Let $\G^{\tilde{X}, \tilde{Y}}_{A\to B} = \U^{-1}\circ\widetilde{\U}_{A \to B}^{X, Y}$ denote the overall Gaussian channel. Let 
\begin{align}
\tilde{X} = 
\begin{bmatrix}
 x_{11} & x_{12} \\ 
 x_{21} & x_{22}  
 \end{bmatrix}~,
 \tilde{Y} = 
 \begin{bmatrix}
y_{11} & y_{12}\\
y_{12} & y_{22} 
 \end{bmatrix}. 
\end{align} 
Then from the unitary invariance of fidelity, we find that
 \begin{multline}
 F(\U_{A \to B}(\psi_{RA}(\bar{n})),  \widetilde{\U}_{A \to B}^{X, Y}(\psi_{RA}(\bar{n}))) \\
  =F(\psi_{RA}(\bar{n}),\G^{\tilde{X}, \tilde{Y}}_{A\to B}(\psi_{RA}(\bar{n}))),
 \end{multline}
 where it is implicit that an identity channel acts on the reference system $R$. 
 
 By expanding $F(\U_{A \to B}(\psi_{RA}(\bar{n})), \widetilde{\U}_{A \to B}^{X, Y}(\psi_{RA}(\bar{n})))$ about $\bar{n} = \infty$, we find that \cite{Mathematica}
 \begin{align}
 &F(\U_{A \to B}(\psi_{RA}(\bar{n})),\widetilde{\U}_{A \to B}^{X, Y}(\psi_{RA}(\bar{n}))) \nonumber \\
 & = \frac{1}{(-1+x_{11} + x_{12}x_{21}+x_{22}-x_{11}x_{22})\bar{n}^2 } + O(1/\bar{n})^3. 
 \end{align}
Therefore, 
\begin{align}\label{eq:fidelity-zero-arbit-GU}
\lim_{\bar{n} \to \infty} F(\U_{A \to B}(\psi_{RA}(\bar{n})),\widetilde{\U}_{A \to B}^{X, Y}(\psi_{RA}(\bar{n}))) = 0. 
\end{align}
Using \eqref{eq:powers} and \eqref{eq:fidelity-zero-arbit-GU}, we find that 
\begin{align}
\lim_{\bar{n}\to\infty}\Vert\U_{A \to B} - \widetilde{\U}_{A \to B}^{X, Y} \Vert_{\diamond} = 2,
\end{align}
which implies that the sequence $\{\widetilde{\U}^{X, Y}\}_{(X, Y) \in S_1}$ does not converge uniformly to the ideal single-mode Gaussian unitary transformation $\U_{A \to B}$. 

Similarly, it can be shown that a sequence of two-mode Gaussian channels does not converge uniformly to an ideal two-mode Gaussian unitary transformation \cite{Mathematica}.

On the other hand, as a consequence of  \cite[Proposition~1]{M18}, the convergence holds in the strong sense for both the one- and two-mode case, and in fact for the general $n$-mode case.

\section{Conclusion}

In this paper, we studied different performance metrics to analyze how well an ideal displacement operator, an ideal single-mode squeezer, and an ideal SUM gate can be simulated experimentally. In particular, we proved that none of these experimental approximations converge uniformly to the ideal Gaussian processes. Rather, convergence occurs in the strong sense. 

We also discussed the notion of uniform convergence on the set of density operators whose marginals on the channel input have bounded energy, which is the most relevant from an experimental perspective, given that experiments are generally energy sensitive. In particular, we reduced the problem of distinguishing an ideal displacement operator from its experimental approximation to the problem of distinguishing a pure-loss channel from an ideal channel. We provided an analytic expression for the energy-constrained sine distance between an ideal displacement unitary and its approximation in terms of experimental parameters, by using the result of \cite{N18}. Moreover, we established two different lower bounds on the Shirokov--Winter energy-constrained diamond distance between an ideal displacement operator and its experimental approximation for low values of the energy constraint on input states. These bounds could be used to determine the requirements needed to implement a displacement operator to any desired accuracy. The displacement operator is ubiquitous in quantum optics and plays a critical role in CV quantum teleportation, CV quantum error correction and quantum computation, and quantum metrology. Therefore, quantification of the accuracy in simulating a displacement operator is important for several practical applications. 

We then introduced two different methods to model the noise or loss in implementing both beamsplitters and phase rotations. For these models, we established analytical bounds on the Shirokov--Winter energy-constrained diamond distance between these ideal gates and their experimental approximations. These bounds are relevant for characterizing the performance of any experiment consisting of beamsplitters  and phase rotations. 

Similarly, we discussed the notion of uniform convergence on the set of density operators whose marginals on the channel input have bounded energy for experimental approximations of both the single-mode squeezing unitary and the SUM gate. We considered several experimentally relevant input quantum states and studied how close these experimental approximations are to the ideal quantum processes. It is an interesting open question to determine the optimal value of energy-constrained distance measures and the corresponding optimal state to completely characterize these experimental approximations of the ideal quantum processes. 

In this paper, homodyne measurements involved in simulating a single-mode squeezer and a SUM gate were considered ideal. We expect that the well-known experimental approximation of homodyne detection converges strongly to  ideal homodyne detection, based on the calculation of \cite[Appendix~K]{WPS01}, and we also expect that the experimental approximation will not converge uniformly.  However, it is an open question to determine the optimal value of energy-constrained distance measures and corresponding optimal states when homodyne measurements involved in these simulations are not ideal. Another interesting direction is to use these results to study the error propagation in an experiment based on quantum optical elements.  We leave this for future work.

\begin{acknowledgments}
We thank Lior Cohen, Jonathan P. Dowling, A.~R.~P.~Rau, and  Barry Sanders for
discussions related to the topic of this paper. We are indebted to an anonymous referee for the suggestion to include an analysis for beamsplitters and phase rotations. MMW acknowledges support from the Office
of Naval Research and the National Science Foundation. KS acknowledges
support from the National Science
Foundation under Grant No.~1714215.
\end{acknowledgments}

\bibliographystyle{unsrt}
\bibliography{Ref}

\appendix


\section{Preliminaries}
\label{sec:review}

We begin by reviewing some definitions and prior results relevant for the rest of the appendices. We point readers to \cite{H12,AS17} for background.

\bigskip 

\noindent \textbf{Gaussian states and channels} Let $\H$ denote an infinite-dimensional, separable Hilbert space. Let $\T(\H)$ denote the set of trace-class operators, i.e., all operators $M$ with finite trace norm: $\left\Vert M \right\Vert_1 \equiv \tr(\sqrt{M^{\dagger}M}) < \infty$. Let $\D(\H)$ denote the set of density operators acting on $\H$, i.e., those that are positive semi-definite with unit trace. The continuous-variable system of interest in this work is $n$ quantized modes of the electromagnetic field. Any physical state of $n$ bosonic modes can be described by density operators acting on a tensor-product Hilbert space $\H^{\otimes n} = \otimes_{i=1}^n \H_i$, where $\H_i$ is the Hilbert space corresponding to the $i$th mode. Let $\hat{x}_i$ and $ \hat{p}_i$ denote the respective position- and momentum-quadrature operators of the $i$th mode. Let $\hat{r} \equiv (\hat{x}_1, \hat{p}_1, \dots, \hat{x}_n, \hat{p}_n)^T$. Then the following commutation relation holds:
\begin{align}
[\hat{r}, \hat{r}^T] = i \Omega~, 
\end{align}
where $\Omega = \bigoplus_{i=1}^{n} \Omega_0$, and $\Omega_0 = \begin{bmatrix}
0 & 1 \\
-1 & 0
\end{bmatrix}$. Furthermore, we take the annihilation operator for the $i$th mode as $\hat{a}_i = (\hat{x}_i + i \hat{p}_i)/\sqrt{2}$. 

For $r \in \mathbb{R}^{2n}$, we define the unitary displacement operator $D(r) \equiv \exp(i r^T \Omega \hat{r})$. Moreover, the set $\{D(r)\}_r$ forms an orthogonal complete set on the space of operators acting on the Hilbert space $L^2(\mathbb{R}^n)$ of square integrable functions, in the sense that $\tr(D(r)D(-r^{\prime}) ) = (2\pi)^{n}\delta^{2n}(r-r^{\prime})$. Therefore, any quantum state $\rho \in \D(\H)$ can be represented as follows:
\begin{align}
\rho = \frac{1}{(2\pi)^n}\int d^{2n} r~\chi_{\rho} (r) D(r)~,
\end{align}
where $\chi_{\rho}(r) \equiv \tr(D(-r) \rho )$ is the Wigner characteristic function of the state $\rho$. Moreover, a quantum state $\rho$ is Gaussian if its characteristic function has the following form:
\begin{align}
\chi_{\rho}(r) = \exp\!\left(-\frac{1}{4}r^T \Omega^T V_{\rho} \Omega r + i r^T \Omega^T \mu_{\rho}\right),
\end{align}
where $\mu_{\rho} \equiv \langle \hat{r} \rangle_{\rho}$ is the mean vector of $\rho$ and $V_{\rho} \equiv \langle \{(\hat{r} - \mu_{\rho}), (\hat{r} - \mu_{\rho})^T \}\rangle_{\rho}$ is the covariance matrix.

Quantum channels that take any Gaussian input state to another Gaussian state are called quantum Gaussian channels. Let $\rho$ be an input state with the characteristic function $\chi_{\rho}(r)$. Then under the action of a Gaussian channel~$\N$ from $n$ modes to $m$ modes, $\chi_{\rho}(r)$ transforms as follows:
\begin{multline}
\label{eq:Wigner-charac-transformation}
\chi_{\rho}(r) \to \chi_{\N(\rho)}(r) \\ = \chi_{\rho}\!\left(\Omega^TX^T\Omega r\right)\exp\!\left(-\frac{1}{4}r^T \Omega^T Y \Omega r + i r^T \Omega^T d\right)~,
\end{multline}
where $X$ is a real $2m \times 2n$ matrix, $Y$ is a real $2m \times 2m$ positive semi-definite symmetric matrix, and $d\in \mathbb{R}^{2m}$, such that they satisfy the following condition for $\N$ to be a physical channel:
\begin{align}\label{eq:cptp-Gaussian-condition}
Y + i \Omega - i X \Omega X^T  \geq 0~. 
\end{align}
Furthermore, since a Gaussian state $\rho$ can be completely characterized by its mean vector $\mu_{\rho}$ and covariance matrix~$V_{\rho}$, the action of a Gaussian channel on  $\rho$ can be described as follows
\begin{align}
&\mu_{\rho} \to X \mu_{\rho} + d~, \nonumber \\
& V_{\rho} \to X V_{\rho} X^T + Y~. 
\end{align}

Let $Y=0$. Then from \eqref{eq:cptp-Gaussian-condition} we get $X \Omega X^T = \Omega$, which further implies that $X$ is an element of the real symplectic group $Sp(2n, \mathbb{R})$. The symplectic group is a set of transformations that preserve the anti-symmetric form $\Omega$ when acting by congruence, i.e., $S \Omega S^T = \Omega,~ \forall S\in Sp(2n, \mathbb{R})$. Therefore, the group of Gaussian unitaries is identified with $Sp(2n, \mathbb{R})$. 

Let $\rho$ be an $n$-mode Gaussian quantum state with the mean vector $\mu_{\rho}$ and the covariance matrix $V_{\rho}$. The Wigner function of $\rho$ is given by
\begin{align}
W(r) = \frac{2^n}{\pi^n\sqrt{\text{Det}(V_{\rho})}} \exp\left[-(r-\mu_{\rho})^T V_{\rho}^{-1}(r-\mu_{\rho}) \right]. 
\end{align}

Gaussian state transformations can also be described in the phase-space formalism. In particular,  the action of a symplectic transformation $S$ on a Gaussian state is given by
\begin{align}
W(r) \to W(S^{-1}r)~.
\end{align}
Moreover, the Wigner function of a Gaussian input state transforms under a linear displacement $D(-\bar{r})\rho D(\bar{r})$ as
\begin{align}
W(r) \to W_G(r-\bar{r})~.
\end{align}

A coherent state $\ket{\alpha}$ is an eigenvector of the annihilation operator $\hat{a}$ with eigenvalue $\alpha$, i.e., $\hat{a} \ket{\alpha} = \alpha \ket{\alpha}$.  The coherent state $\ket{\alpha}$ can also be represented as $ \ket{\alpha} = D(\alpha) \ket{0}$. Moreover, the overlap between two coherent states $\ket{\alpha}$ and $\ket{\beta}$ is given by
\begin{align} \label{eq:overlap-coherent-states}
\langle \beta \vert \alpha \rangle = \exp\left(-\frac{1}{2}\vert \alpha - \beta \vert^2\right) \exp\left[\frac{1}{2}(\alpha \beta^{*} - \alpha^*\beta)\right]~.
\end{align}

A single-mode thermal state with mean photon number $\bar{n} = 1/(e^{\beta \omega} -1)$ has the following representation in the photon number basis:
\begin{equation}\label{eq:thermalstate}
\theta(\bar{n}) \equiv \frac{1}{1+\bar{n}}\sum_{n=0}^{\infty} \left(\frac{\bar{n}}{\bar{n}+1}\right)^n \ket{n}\bra{n}~.
\end{equation}

In our paper, we employ the two-mode squeezed vacuum state with parameter $\bar{n}$, which is equivalent to a purification of the thermal state in \eqref{eq:thermalstate} and is defined as
\begin{equation}
\ket{\psi_{\operatorname{TMS}}(\bar{n})} \equiv \frac{1}{\sqrt{\bar{n}+1}}\sum_{n=0}^{\infty}\sqrt{\left(\frac{\bar{n}}{\bar{n}+1}\right)^n} \ket{n}_R\ket{n}_A~.
\end{equation}

\textbf{Quantum pure-loss channel.} A quantum pure-loss channel is a Gaussian channel that can be characterized by a beamsplitter of transmissivity $\eta\in (0,1)$, coupling the signal input state with the vacuum state, and followed by a partial trace over the environment. In the Heisenberg picture, the beamsplitter transformation is given by the following Bogoliubov transformation:
\begin{align}\label{eq:beam-splitter-transformation1}
&\hat{b} = \sqrt{\eta} \hat{a} - \sqrt{1-\eta} \hat{e},\\ 
&\hat{e}' = \sqrt{1-\eta}\hat{a} + \sqrt{\eta} \hat{e},
\end{align}
where $\hat{a}, \hat{b}$, $\hat{e}$, and $\hat{e}'$  are the annihilation operators representing the sender's input mode, the receiver's output mode, an environmental input mode, and an environmental output mode of the channel, respectively.

 \textbf{Topologies of convergence}.
Uniform and strong convergence in the context of infinite-dimensional quantum channels were studied in \cite{SH07}. A connection between the notion of strong convergence and the notion of uniform convergence over energy-bounded states was established in \cite{S18}. Later, these different topologies of convergence were studied in the context of linear bosonic channels and Gaussian dilatable channels in \cite{LSW18}. Furthermore, topologies of convergence in the context of teleportation simulation of physically relevant phase-insensitive bosonic Gaussian channels have been investigated in \cite{M18}.

\section{Convergence of the experimental implementation of a displacement operator}
In this section, we begin by providing a detailed proof of the convergence of the experimental implementation of a displacement operator from \cite{P96} to the ideal displacement operator. We then provide a proof for \eqref{eq:trace-distance-for-displacement}.

We begin by showing that the channel corresponding to the experimental implementation of a displacement operator is equivalent to a pure-loss channel followed by the ideal displacement operator. Consider that
\begin{align}
& (\operatorname{Tr}_{B}\circ\mathcal{B}_{AB}^{\eta})(\rho_{A}\otimes
|\beta\rangle\langle\beta|_{B}) \notag \\
&=(\operatorname{Tr}_{B}\circ
\mathcal{B}_{AB}^{\eta}\circ\mathcal{D}_{B}^{\beta})(\rho_{A}\otimes
|0\rangle\langle0|_{B})\\
& =(\operatorname{Tr}_{B}\circ\lbrack\mathcal{D}_{A}^{\sqrt{1-\eta}%
\beta}\otimes\mathcal{D}_{B}^{\sqrt{\eta}\beta}]\circ\mathcal{B}_{AB}^{\eta
})(\rho_{A}\otimes|0\rangle\langle0|_{B})\\
& =(\operatorname{Tr}_{B}\circ\mathcal{D}_{A}^{\sqrt{1-\eta}\beta
}\circ\mathcal{B}_{AB}^{\eta})(\rho_{A}\otimes|0\rangle\langle0|_{B})\\
& =(\mathcal{D}_{A}^{\sqrt{1-\eta}\beta}\circ\operatorname{Tr}_{B}%
\circ\mathcal{B}_{AB}^{\eta})(\rho_{A}\otimes|0\rangle\langle0|_{B})\\
& =(\mathcal{D}_{A}^{\alpha}\circ\mathcal{L}_{A}^{\eta})(\rho_{A}). \label{eq:displacement-approx-equiv-pure-loss}
\end{align}
The first equality follows from the definition of a coherent state. The second equality follows from the following covariance of the beamsplitter unitary with respect to displacement operators \cite{AS17}:
\begin{equation}
\mathcal{B}_{AB}^{\eta}\circ\mathcal{D}_{B}^{\beta}=[\mathcal{D}_{A}^{\sqrt{1-\eta}\beta}\otimes\mathcal{D}_{B}^{\sqrt{\eta}\beta}]\circ
\mathcal{B}_{AB}^{\eta}~.
\end{equation}
The third equality follows from the cyclicity of partial trace. In the final equality we  defined the pure-loss channel as
$\mathcal{L}_{A}^{\eta}(\rho_{A})=(\operatorname{Tr}_{B}\circ\mathcal{B}%
_{AB}^{\eta})(\rho_{A}\otimes|0\rangle\langle0|_{B})$.

Let $\psi_{RA}$ be an arbitrary two-mode state. To compute the fidelity
between the ideal displacement operator and its experimental approximation,
consider that%
\begin{equation}
F\big(\mathcal{D}_{A}^{\alpha}(\psi_{RA}),(\mathcal{D}_{A}^{\alpha}\circ
\mathcal{L}_{A}^{\eta})(\psi_{RA})\big)=F\big(\psi_{RA},\mathcal{L}_{A}^{\eta}%
(\psi_{RA})\big),
\end{equation}
where we employed the unitary invariance of the fidelity. 

Let $\ket{\delta}$ be a coherent state. Then $\ket{\delta}$ transforms under the pure-loss channel $\mathcal{L}_{A}^{\eta}$ to $\ket{\sqrt{\eta} \delta}$. Therefore, by using \eqref{eq:overlap-coherent-states} we get
\begin{align}
F(\ket{\delta}\bra{\delta},\mathcal{L}_{A}^{\eta}(\ket{\delta}\bra{\delta})) = \exp\left[- \vert \delta \vert^2 (1-\sqrt{\eta})^2\right] ~,
\end{align}
which converges to zero as $\vert \delta \vert^2 \to \infty$, and in turn implies that the sequence  $\{\widetilde{\D}^{\eta, \frac{\alpha}{\sqrt{1-\eta}}}\}_{\eta\in [0,1]}$ does not converge uniformly to the ideal displacement channel $\D^{\alpha}$. 

We now show that convergence occurs in the strong sense. Let $\rho_A$ denote the input state.  Let $\chi_{\rho_A}(x, p)$ denote the Wigner characteristic function of the state $\rho_A$. Then from \eqref{eq:Wigner-charac-transformation}, the Wigner characteristic function of the output of an ideal displacement channel is given by
\begin{align}
\chi_{\D^{\alpha}(\rho_A)}(x, p ) = \chi_{\rho_A}(x, p) \exp{\big[i \sqrt{2} ( p\text{Re}(\alpha)-x\text{Im}(\alpha))\big]}~. 
\end{align}

We now find $X$, $Y$ matrices and the $d$ vector corresponding to the Gaussian channel $\widetilde{\D}^{\eta, \frac{\alpha}{\sqrt{1-\eta}}}$. By using \eqref{eq:displacement-approx-equiv-pure-loss}, we get $X =\text{diag}(\sqrt{\eta}, \sqrt{\eta}) $, $Y = \text{diag}(1-\eta, 1-\eta)$, and $d = (\sqrt{2}\text{Re}(\alpha), \sqrt{2}\text{Im}(\alpha))^{\text{T}}$. Let $\tilde{\rho}^{\text{out}}_A = \widetilde{\D}^{\eta, \frac{\alpha}{\sqrt{1-\eta}}}(\rho_{A})$. Then from \eqref{eq:Wigner-charac-transformation}, the Wigner characteristic function of $\tilde{\rho}^{\text{out}}_A$ is given by
\begin{multline}
\chi_{\tilde{\rho}^{\text{out}}_A} (x, p)= \chi_{\rho_A}(\sqrt{\eta}x, \sqrt{\eta}p) \times \\
\exp{\big[i \sqrt{2} (  p\text{Re}(\alpha)-x\text{Im}(\alpha) ) -(1/4)(x^2+p^2)(1-\eta)\big]}~. 
\end{multline}
Therefore, for each $\rho_A \in \D(\H_A)$, and for all $x, p \in \mathbb{R}$
\begin{align}
\lim_{\eta \to 1} \chi_{\tilde{\rho}^{\text{out}}_A} (x, p) = \chi_{\D^{\alpha}(\rho_A)}(x, p)~,
\end{align}
which implies that the sequence  $\{\widetilde{\D}^{\eta, \frac{\alpha}{\sqrt{1-\eta}}}\}_{\eta\in [0,1)}$ converges to  $\D^{\alpha}$ in the strong sense \cite[Lemma~8]{LSW18}. 

We now discuss the notion of uniform convergence on the set of density operators whose marginals on the channel input have bounded energy for experimental approximations of a tensor product of ideal displacement channels.  Let $\{\D^{\alpha_i}\}_{i=1}^K$ be a set of $K$ different displacement channels. We now approximate the tensor product of these operators by a tensor product of $\{\widetilde{\D}^{\eta_i, \beta_i}\}_{i=1}^K$,  such that $\sqrt{1-\eta_i}\beta_i = \alpha_i,~ \forall i\in\{1, \dots, K\}$. Let $H_A$ denote the Hamiltonian of the system $A$. Moreover, suppose that there is an average energy constraint on the input state to the tensor product of displacement operators, i.e., $\tr(\widetilde{H}_{A^K}\psi_{A^K} ) \leq E$, where 
\begin{equation}
\widetilde{H}_{A^K} \equiv H_A\otimes I \otimes \dots \otimes I+\dots + I\otimes \dots \otimes I \otimes H_A,
\end{equation}
and $E \in [0,\infty)$. Let $\tr(H_A \psi_{A_i}) = E_i$, where $E_i \in [0,\infty)$, $\forall i\in\{1, \dots, K\}$.

\onecolumngrid

\vspace{.2in}
Consider the following chain of inequalities:
\begin{align}
 \frac{1}{2} \left\Vert \left(\bigotimes_{i=1}^K\D^{\alpha_i}\right)(\psi_{RA^K}) -  \left(\bigotimes_{i=1}^{K}\widetilde{\D}^{\eta_i, \alpha_i/\sqrt{1-\eta_i}}\right)(\psi_{RA^K})\right\Vert_{1}  &\leq \sum_{i=1}^{K} C\big(\L^{\eta_i}_{A_i}(\psi_{RA_i}), \I_{A_i}(\psi_{RA_i})  \big)\\
& \leq \max_{\{E_i\}_i: \sum_i E_i \leq E} \sum_{i=1}^{K}\sqrt{1 - [(1- \{E_i\}) \sqrt{\eta_i}^{\lfloor E_i\rfloor} + \{E_i\}\sqrt{\eta_i}^{\lceil E_i\rceil} ]^2}.
\end{align}
The first inequality follows from \eqref{eq:powers} and \cite[Proposition~1]{M18}.  The last inequality  follows from the recent result of \cite{N18}, and due to the maximization over a set of energy values satisfying the input energy constraint. Since the chain of inequalities is true for all input states satisfying the input energy constraint, the following holds
\begin{equation}
\frac{1}{2} \left\Vert \bigotimes_{i=1}^K\D^{\alpha_i} -  \bigotimes_{i=1}^{K}\widetilde{\D}^{\eta_i, \alpha_i/\sqrt{1-\eta_i}}\right\Vert_{\diamond E} 
\leq  \max_{\{E_i\}_i: \sum_i E_i \leq E} \sum_{i=1}^{K}\sqrt{1 - [(1- \{E_i\}) \sqrt{\eta_i}^{\lfloor E_i\rfloor} + \{E_i\}\sqrt{\eta_i}^{\lceil E_i\rceil} ]^2}~. 
\end{equation}
Therefore,  $\{\bigotimes_{i=1}^{K}\widetilde{\D}^{\eta_i, \alpha_i/\sqrt{1-\eta_i}}\}_{\eta_1, \dots, \eta_K \in [0,1)}$ converges uniformly to $\bigotimes_{i=1}^K\D^{\alpha_i}$ on the set of density operators whose marginals on the channel input have bounded energy.

\vspace{.1in}
We now provide a proof for \eqref{eq:trace-distance-for-displacement}. Let $t = \sqrt{\eta}$ and $r = \sqrt{1-\eta}$. Then the action of a pure-loss channel with transmissivity $\eta$ on  $\psi_{RA}$ in \eqref{eq:opt-state}
is given by 
\begin{align}
(\I_R \otimes \L_A^{\eta})(\psi_{RA})  = \tr_B((\I_R \otimes \B^{\eta}_{AB} )(\psi_{RA} \otimes \ket{0}\bra{0}_B)) \equiv \rho_{RA}~. 
\end{align}

Consider the following unitary evolution of the pure state $\ket{\psi}_{RA}\otimes \ket{0}_B$: 
\begin{align}
&(I_R \otimes B^{\eta}_{AB} )(\ket{\psi}_{RA} \otimes \ket{0}_B) \nonumber\\
&= B^{\eta}_{AB}( \sqrt{\frac{1-\{E\}}{\lfloor E\rfloor !}} (\hat{a}_{\text{in}}^{\dagger})^{\lfloor E\rfloor}    \ket{0}_A \ket{\tau}_R \ket{0}_B + \sqrt{ \frac{\{E\}}{\lceil E \rceil !}} (\hat{a}_{\text{in}}^{\dagger})^{\lceil E\rceil}     \ket{0}_A \ket{\tau^{\perp}}_R \ket{0}_B  )\\
&=  \sqrt{\frac{1-\{E\}}{\lfloor E\rfloor !}} (t \hat{a}_{\text{out}}^{\dagger}+ r \hat{b}_{\text{out}}^{\dagger})^{\lfloor E\rfloor}    \ket{0}_A \ket{\tau}_R \ket{0}_B + \sqrt{ \frac{\{E\}}{\lceil E \rceil !}} (t \hat{a}_{\text{out}}^{\dagger}+ r \hat{b}_{\text{out}}^{\dagger})^{\lceil E\rceil}     \ket{0}_A \ket{\tau^{\perp}}_R \ket{0}_B~ \label{eq:pure-state-after-BS}\\
& =\sqrt{1-\{E\}} \sum_{k_1} \sqrt{\binom{\lfloor E \rfloor}{k_1}} t^{k_1} r^{ \lfloor E \rfloor - k_1 } \ket{k_1}_A \ket{\tau}_R \ket{\lfloor E \rfloor -k_1 }_B + \sqrt{ \{ E  \}   } \sum_{k_2} \sqrt{\binom{\lceil E \rceil}{k_2}} t^{k_2} r^{ \lceil E \rceil - k_2 } \ket{k_2}_A \ket{\tau^{\perp}}_R \ket{\lceil E \rceil -k_2}_B~. 
\end{align}
The second equality follows from the beamsplitter transformation in the Heisenberg picture \eqref{eq:beam-splitter-transformation1}. Then the density operator after tracing out $B$  in \eqref{eq:pure-state-after-BS} is given by 
\begin{align}
&\rho_{RA}  = (1- \{ E  \}) \sum_{k_1} \binom{\lfloor E \rfloor}{k_1} t^{2 k_1} r^{2( \lfloor E \rfloor - k_1 )} \ket{k_1}\bra{k_1}_A \ket{\tau}\bra{\tau}_R + \{ E\} \sum_{k_2} \binom{\lceil E \rceil}{k_2} t^{2k_2} r^{2(\lceil E\rceil -k_2  )} \ket{k_2}\bra{k_2}_A \ket{\tau^{\perp}}\bra{\tau^{\perp}}_R \nonumber \\
&+ \sqrt{(1-\{E\})\{E\}} \sum_{k_1} \sqrt{\binom{\lfloor E \rfloor}{k_1} \binom{\lceil E \rceil}{k_1 + 1} } t^{ (2k_1+1)} r^{ \lfloor E  \rfloor + \lceil E \rceil - 2k_1 -1 } \Bigg(\ket{k_1}\bra{k_1+1}_A \ket{\tau}\bra{\tau^{\perp} }_R  +  \ket{k_1+1}\bra{k_1}_A \ket{\tau^{\perp}}\bra{\tau}_R \Bigg)\nonumber~. \label{eq:noisy-state}\\
\end{align}

On the other hand, the density operator of the state in \eqref{eq:opt-state} is given by
\begin{multline}
\psi_{RA} = (1 - \{ E  \}) \ket{\lfloor E \rfloor}\bra{\lfloor E \rfloor}_A \ket{\tau}\bra{\tau}_R + \{E\} \ket{\lceil E \rceil}\bra{\lceil E \rceil}_A \ket{\tau^{\perp}}\bra{\tau^{\perp}}_R  \\
 + \sqrt{  (1 - \{ E  \})\{E\}   } \bigg( \ket{\lfloor E \rfloor}\bra{\lceil E \rceil}_A \ket{\tau}\bra{\tau^{\perp}}_R + \ket{\lceil E \rceil}\bra{\lfloor E \rfloor}\ket{\tau^{\perp}}\bra{\tau}_R       \bigg)~. \label{eq:pure-state}
\end{multline}

Let $\varrho_{RA}$ denote the operator corresponding to the difference of  $\rho_{RA}$ in \eqref{eq:noisy-state} and $\psi_{RA}$ in \eqref{eq:pure-state}. 
\begin{align}
\varrho_{RA} &= \rho_{RA} - \psi_{RA}  \\
 &= (1- \{ E  \}) \Bigg[ \sum_{k_1} \binom{\lfloor E \rfloor}{k_1} t^{2 k_1} r^{2( \lfloor E \rfloor - k_1 )} \ket{k_1}\bra{k_1}_A - \ket{\lfloor E \rfloor}\bra{\lfloor E \rfloor}_A \Bigg] \ket{\tau}\bra{\tau}_R \nonumber \\
& +  \{ E\} \Bigg[ \sum_{k_2} \binom{\lceil E \rceil}{k_2} t^{2k_2} r^{2(\lceil E\rceil -k_2  )} \ket{k_2}\bra{k_2}_A -  \ket{\lceil E \rceil}\bra{\lceil E \rceil}_A\Bigg] \ket{\tau^{\perp}}\bra{\tau^{\perp}}_R \nonumber \\
& + \sqrt{(1-\{E\})\{E\}}\Bigg[ \sum_{k_1} \sqrt{\binom{\lfloor E \rfloor}{k_1} \binom{\lceil E \rceil}{k_1 + 1} } t^{ (2k_1+1)} r^{ \lfloor E  \rfloor + \lceil E \rceil - 2k_1 -1 }  \Bigg(\ket{k_1}\bra{k_1+1}_A \ket{\tau}\bra{\tau^{\perp} }_R  +  \ket{k_1+1}\bra{k_1}_A \ket{\tau^{\perp}}\bra{\tau}_R \Bigg) \nonumber \\
& - \ket{\lfloor E \rfloor}\bra{\lceil E \rceil}_A \ket{\tau}\bra{\tau^{\perp}}_R - \ket{\lceil E \rceil}\bra{\lfloor E \rfloor}\ket{\tau^{\perp}}\bra{\tau}_R \Bigg] ~. 
\end{align}

We now find $\Vert \varrho_{RA}\Vert_{1}$ for a simple case. Let $0<E<1$. Then $\lfloor E \rfloor = 0$ and $\lceil E\rceil = 1$. Let $\ket{\tau} = \ket{0}$ and $\ket{\tau^{\perp}} = \ket{1}$. Therefore, the operator $\varrho_{RA}$ is given by

\begin{align}
\varrho_{RA} =\{ E \}\Bigg[  r^2 \ket{0}\bra{0}_A + t^2 \ket{1}\bra{1}_A - \ket{1}\bra{1}_A     \Bigg] \ket{1}\bra{1}_R   + \sqrt{(1-  \{  E \}  ) \{E\}   } \Bigg[ (t -1)\bigg(  \ket{0}\bra{1}_A \otimes \ket{0}\bra{1}_R + \ket{1}\bra{0}_A \otimes \ket{1}\bra{0}_R     \bigg)       \Bigg]~.
\end{align}

After expressing $r$ and $t$ in terms of $\eta$, the matrix representation of $\varrho_{RA}$ is as follows
\begin{equation} \varrho_{RA}
=
\begin{bmatrix}
0 & 0 & 0 & - \sqrt{(1-  \{  E \}  ) \{E\}   }(1-\sqrt{\eta})\\
0 & \{ E \} (1-\eta) & 0 & 0\\
0 & 0 & 0 & 0\\
- \sqrt{(1-  \{  E \}  ) \{E\}   }(1-\sqrt{\eta}) & 0 & 0 & -\{ E \}(1-\eta)
\end{bmatrix}
\end{equation}

Then $\Vert \varrho_{RA}\Vert_{1}$ is given by \cite{Mathematica}
\begin{align}
\frac{1}{2}\left\Vert  \varrho_{RA} \right\Vert_1  &= \frac{1}{2}\Bigg[ \{E\}(1-\eta) +\frac{1}{2} \left\vert  (\sqrt{\eta} -1 )\Bigg(\{ E  \} (1+ \sqrt{\eta}) -\sqrt{ \{E\}\bigg[4 + \{E\}(    -3 + 2\sqrt{\eta}+ \eta) \bigg]  }  \Bigg)  \right\vert \nonumber \\ 
&\qquad +\frac{1}{2} \left\vert  (\sqrt{\eta} -1 )\Bigg(\{ E  \} (1+ \sqrt{\eta}) +\sqrt{ \{E\}\bigg[4 + \{E\}(    -3 + 2\sqrt{\eta}+ \eta) \bigg]  }  \Bigg)  \right\vert\Bigg]. 
\end{align}
Since 
\begin{align}
\{E\}(1+\sqrt{\eta}) \leq \sqrt{ \{E\}\bigg[4 + \{E\}(    -3 + 2\sqrt{\eta}+ \eta) \bigg]  }  \Bigg) 
\end{align}
for all $0< E<1$ and $\eta\geq 0$, we get that 
\begin{align}
\frac{1}{2}\Vert \varrho_{RA}\Vert_1 = \frac{1}{2} \Bigg[   \{E\}(1-\eta ) + (1-\sqrt{\eta}) \sqrt{\{E\}  \bigg[4 + \{E\}(    -3 + 2\sqrt{\eta}+ \eta) \bigg]  } \Bigg]~.
\end{align}

\section{Convergence of the experimental implementation of a beamsplitter}\label{app:bs}
In this section, we provide a proof for \eqref{eq:strong-conv-bs1}. Let $\chi_{\rho_{A_1A_2}}(x_1, p_1, x_2, p_2)$ denote the Wigner characteristic function for the input state $\rho_{A_1A_2}$. 
Let $\tilde{\rho}^{\text{out}}_{A_1A_2}(\eta,\phi, \eta')$ denote the state after the action of $\widetilde{\B}^{\eta, \phi, \eta'}$ on $\rho_{A_1A_2}$:
\begin{align}
\tilde{\rho}^{\text{out}}_{A_1A_2}(\eta,\phi, \eta') = \widetilde{\B}^{\eta, \phi, \eta'}(\rho_{A_1A_2})~.
\end{align}
We now find the terms involved in \eqref{eq:Wigner-charac-transformation} for both $\B^{\eta, \phi}(\rho_{A_1A_2})$ and $\rho^{\eta, \phi, \eta'}_{A_1A_2}$.  $X_{\widetilde{\B}^{\eta, \phi, \eta'}}$ matrix corresponding to the operation $\widetilde{\B}^{\eta, \phi, \eta'}$ is given by 
\begin{equation}
X_{\widetilde{\B}^{\eta, \phi, \eta'}} = X_{\B^{\eta, \phi}} \cdot \sqrt{\eta'}I_4,
\end{equation}
where $I_4$ is a four-dimensional identity matrix. Moreover, $Y_{\widetilde{\B}^{\eta, \phi, \eta'}}$ matrix is given by 
\begin{equation}
Y_{\widetilde{\B}^{\eta, \phi, \eta'}} = (1-\eta')I_4.
\end{equation}
Let $r = (x_1, p_1, x_2, p_2)^{\text{T}}$. Then, for each $\rho_{A_1A_2}\in \D(\H_{A_1}\otimes \H_{A_2})$, and for all $x_1, p_1, x_2, p_2 \in \mathbb{R}$
\begin{align} 
\lim_{\eta' \to 1} \chi_{\tilde{\rho}^{\text{out}}_{A_1A_2}(\eta,\phi, \eta')} (r) 
&= \lim_{\eta' \to 1} \chi_{\rho_{A_1A_2}}(\sqrt{\eta'} \Omega^{\text{T}}X_{\B^{\eta, \phi}}^{\text{T}}\Omega r ) \exp(-\frac{1}{4}(1-\eta')r^{\text{T}}r)\\
&= \chi_{\rho_{A_1A_2}}( \Omega^{\text{T}}X_{\B^{\eta, \phi}}^{\text{T}}\Omega r ) \\
& = \chi_{\B^{\eta, \phi}(\rho_{A_1A_2})}(r).
\end{align} 
We have thus shown that the sequence of characteristic functions $\chi_{\tilde{\rho}^{\text{out}}_{A_1A_2}(\eta,\phi, \eta')}$ converges pointwise to $\chi_{\B^{\eta, \phi}(\rho_{A_1A_2})}$, which implies
by \cite[Lemma~8]{LSW18}
that the sequence $\{\widetilde{\B}^{\eta,\phi, \eta'}\}_{\eta' \in[0,1)}$  converges to $\B^{\eta, \phi}$ in the strong sense.

\end{document}